\newtheorem{theorem}{Theorem}
\newtheorem{lemma}{Lemma}
\title{On Scheduling Mechanisms Beyond the Worst Case}
\author{
 Yansong Gao$^1$
\and
Jie Zhang$^2$
\affiliations
$^1$Applied Mathematics and Computational Science, University of Pennsylvania\\
$^2$Electronics and Computer Science, University of Southampton 
\emails
gaoyans@sas.upenn.edu,
jie.zhang@soton.ac.uk
}
\begin{document}

\maketitle

\begin{abstract}
The problem of scheduling unrelated machines has been studied since the inception of algorithmic mechanism design~\cite{NR99}. It is a resource allocation problem that entails assigning $m$ tasks to $n$ machines for execution. Machines are regarded as strategic agents who may lie about their execution costs so as to minimize their allocated workload. To address the situation when monetary payment is not an option to compensate the machines' costs, \citeauthor{DBLP:journals/mst/Koutsoupias14} [2014] devised two \textit{truthful} mechanisms, K and P respectively, that achieve an approximation ratio of $\frac{n+1}{2}$ and $n$, for social cost minimization. In addition, no truthful mechanism can achieve an approximation ratio better than $\frac{n+1}{2}$. Hence, mechanism K is optimal.  
While approximation ratio provides a strong worst-case guarantee, it also limits us to a comprehensive understanding of mechanism performance on various inputs. This paper investigates these two scheduling mechanisms beyond the worst case. We first show that mechanism K achieves a smaller social cost than mechanism P on every input. That is, mechanism K is pointwise better than mechanism P. Next, for each task $j$, when machines' execution costs $t_i^j$ are independent and identically drawn from a task-specific distribution $F^j(t)$, we show that the average-case approximation ratio of mechanism K converges to a constant. This bound is tight for mechanism K. For a better understanding of this distribution dependent constant, on the one hand, we estimate its value by plugging in a few common distributions; on the other, we show that this converging bound improves a known bound \cite{DBLP:conf/aaai/Zhang18} which only captures the single-task setting. Last, we find that the average-case approximation ratio of mechanism P converges to the same constant.

\end{abstract}

\section{Introduction}
The aim of Algorithmic Mechanism Design \cite{NR99,AGT,PT:09} is to design a system for multiple self-interested participants, such that the participants' self-interested actions in equilibrium lead to a good system performance. More specifically, a designer devises a \emph{mechanism} that collects agents' reports as input and computes an output. A mechanism is \emph{truthful} when each agent gets the highest utility by reporting truthfully, given whatever inputs of the other agents. Truthfulness has become a hard constraint in mechanism design. On the one hand, it simplifies agents' decision-making process; on the other hand, it allows the designer to predict the outcome and evaluate mechanism performance. A truthful mechanism is evaluated by the extent to which it achieves its goals, such as minimizing social cost or maximizing social welfare. For this purpose, we use \textit{approximation ratio} to compare the performance of a truthful mechanism against the optimal solution, over all possible inputs.  

Even though the volume of worst-case inputs may only be a drop in the ocean compared to the whole input space, these inputs determine a mechanism's approximation ratio. In other words, the approximation ratio is a worst-case measure. It strongly guarantees the performance of a mechanism no matter what input is provided. However, it is recognized that worst-case instances do not necessarily represent real-world cases. Hence, a worst-case measure is not sufficient for a comprehensive understanding of mechanism performance on the whole input space. Each and every time we adopt a mechanism, given the input uncertainties, we are concerned about its output versus the optimal solution in expectation. Therefore, the \textit{average-case approximation ratio} is an established complement (\cite{DBLP:conf/mfcs/DengG017,DBLP:conf/aaai/Zhang18}) to the worst-case ratio. 

In this paper, we examine the two mechanisms that are devised by \citeauthor{DBLP:journals/mst/Koutsoupias14} [2014] for the problem of scheduling unrelated machines without money. 
The problem is to schedule a set of tasks on $n$ unrelated machines, in order to minimize the total time. The time needed by a machine to execute a task is the machine's private information. The machines are self-interested and want to minimize their execution time. They may achieve this by misreporting their processing time to the mechanism. However, the machines are bound by their declarations. More specifically, in the case when a machine declares a longer time than its actual time for a task, and it is allocated the task, then in practice, its processing time must be the declared value. This is in the spirit that machines have been observed during the execution of the task and cannot afford to be caught lying about the execution times (an unaffordable penalty would apply). \citeauthor{DBLP:journals/mst/Koutsoupias14} [2014] devises a truthful-in-expectation mechanism, K, which achieves the tight approximation ratio bound of $\frac{n+1}{2}$. It beats the other mechanism, P, that allocates the tasks to machines with probability inversely proportional to the declared value. Later, \citeauthor{DBLP:conf/aaai/Zhang18} [2018] shows that the average-case approximation ratio of mechanism K is upper bound by a constant in the single-task setting. 

\subsection{Our contribution}

Given the fairly unsatisfying worst-case guarantees from \cite{DBLP:journals/mst/Koutsoupias14}, it is desirable to consider other models for the input.
We investigate scheduling mechanisms K and P on a pointwise comparison and through an average-case analysis lens. The contribution of this paper is twofold:
\begin{itemize}
\item On a pointwise comparison, mechanism K is better than mechanism P. We show this by proving that the social cost attainable by mechanism K is always smaller than that of mechanism P, for any input. 
\item Through an average-case analysis lens, we show that the average-case approximation ratio of mechanism K converges to a constant, when machines' execution costs $t_i^j$ are independent and identically drawn from a task-specific distribution $F^j(t)$. This bound is tight for mechanism K. 
Furthermore, 
   \begin{itemize}
      \item For a better understanding of this distribution dependent constant, we estimate its value by plugging in a few standard distributions.
      \item We show that this bound improves a known constant bound in \cite{DBLP:conf/aaai/Zhang18} which only captures the single-task setting.
      \item Surprisingly, the average-case approximation ratio of mechanism P converges to the same constant, and hence both mechanisms perform comparably well for sufficiently large, i.i.d inputs. 
   \end{itemize}
\end{itemize}
In summary, we provide a more in-depth understanding of the performance of these known scheduling mechanisms beyond the worst case.


\subsection{Related work}
Algorithmic Mechanism Design is first studied in \cite{NR99}. The aim of this field is to design algorithms that solve practical problems when the problem inputs are subject to participants' strategic behaviors. It contrasts with classical algorithm design, which takes input data as the truth, even though such data could be manipulated. Therefore, optimizing the solution based on this data does not genuinely solve the underlying problem. 

One typical example of these problems is the scheduling problem. If the machines know how a mechanism allocates tasks to them for execution, to minimize the total execution time a machine may be able to execute fewer tasks by lying about its execution costs. Hence, the first challenge is to design a mechanism in which telling the truth is the dominant strategy of these machines. In addition, on some occasions, the use of payments to incentivize the machines may not be feasible. Thus, approximate mechanism design without money \cite{PT:09} becomes an important area of study. For a more elaborate discussion on algorithmic mechanism design, we refer the readers to \cite{AGT}. 

For the scheduling problem where using payments is not feasible, \citeauthor{DBLP:journals/mst/Koutsoupias14} [2014] first considers the setting in which the machines are bound by their declarations. This is similar to \cite{DBLP:conf/wine/KovacsMV15} where the authors introduce the scenario that a mechanism can check the declaration of the agents at running time and guarantee that those who overreport their costs end up paying the exaggerated costs. These penalties can be enforced whenever costs can be measured and certified. In addition, there are similar notions such as impositions (in \cite{DBLP:journals/tcs/FotakisT13}) for the facility location problems, as well as verification (in \cite{DBLP:journals/jcss/AulettaPPP09}). \citeauthor{DBLP:journals/geb/PennaV14} [2014] present a general construction of collusion-resistant mechanisms with verification that return optimal solutions for a wide class of mechanism design problems, including the scheduling problem. \citeauthor{DBLP:conf/ijcai/KovacsV15} [2015] characterize truthful mechanisms in scheduling problems. \citeauthor{DBLP:conf/aaai/ConitzerV14} [2014] focus on scheduling with uncertain execution time. \citeauthor{RePEc:aea:aecrev:v:107:y:2017:i:11:p:3257-87} [2017] introduces the concept of Obviously Strategy-Proof mechanisms which stems from the observation that the practical evidence of truthfulness depends on the implementation details and agents' rationality. Such designs have been adopted for scheduling problems \cite{DBLP:conf/esa/FerraioliMPV19}.

In \cite{DBLP:journals/mst/Koutsoupias14}, for the single-task setting, the author devises two truthful mechanisms that have an upper bound of $\frac{n+1}{2}$ and $n$, respectively. It is then proved that the lower bound of the problem is $\frac{n+1}{2}$. Hence, the mechanism (denoted by K) that has an approximation ratio bound $\frac{n+1}{2}$ is optimal. The other mechanism (denoted by P) with approximation ratio $n$, which allocates the task to a machine with probability inversely proportional to its declared value, does not attract further attention. By running the optimal mechanism independently on multiple tasks, we get a tight bound of $\frac{n+1}{2}$ for social cost minimization and an upper bound of $\frac{n(n+1)}{2}$ for the makespan minimization. \citeauthor{DBLP:conf/aaai/Zhang18} [2018] investigates the optimal mechanism and shows that its approximation ratio is upper bound by a constant in expectation for when the costs are independent and identically distributed random variables. However, it is unclear whether the optimal mechanism K is better than P beyond worst cases and whether the answer to this question depends on the distributions that the machines' costs follow. 


For the model presented in the seminal paper \cite{NR99} where payments are allowed to facilitate designing truthful mechanisms, the best known upper bound is achieved by allocating each task independently using the classical VCG mechanism, while the best known lower bound is 2.61 \cite{DBLP:journals/algorithmica/KoutsoupiasV13}. \citeauthor{DBLP:journals/mor/AshlagiDL12} [2012] prove that the upper bound of $n$ is tight for anonymous mechanisms. For randomized mechanisms, the best known upper bound is $\frac{n+1}{2}$ shown by \cite{DBLP:conf/soda/MualemS07}. For the special case of related machines, where the private information of each machine is a single value, \citeauthor{DBLP:conf/focs/ArcherT01} [2001] give a randomized 3-approximation  mechanism. \citeauthor{DBLP:journals/geb/LaviS09} [2009] show a constant approximation ratio for the special case that the processing times of each task can take one of two fixed values. \citeauthor{DBLP:journals/tcs/Yu09} [2009] generalizes this result to two-range-values, while together with \cite{DBLP:conf/stacs/LuY08} and \cite{DBLP:conf/wine/Lu09}, show constant  bounds for the case of two machines. 




\section{Preliminaries}

In the problem of scheduling unrelated machines without payment, there are $n$ self-interested machines (also known as agents) and $m$ tasks. Machine $i$'s cost to execute task $j$ is $t_i^j$, $i\in [n], j\in [m]$. Since there is no monetary payment to compensate the machines' costs, in a game-theoretical setting, self-interested machines may report untrue costs $\tilde{t}_i^j$ to a mechanism, instead of their true costs $t_i^j$, to minimize the execution workload that a mechanism allocates to them. A mechanism is an algorithm that takes costs $\tilde{t}_i^j$ as inputs, and according to which we allocate tasks to machines. We are interested in mechanisms that minimize the total execution cost.

Denote vector ${\bf \tilde{t}}_i = (\tilde{t}_i^1, \cdots, \tilde{t}_i^m) = (\tilde{t}_i^j)_{j \in [m]}$ the execution costs that machine $i$ reports to a mechanism and ${\bf t}_i = (t_i^j)_{j\in [m]}$ their true execution costs. Let $\tilde{\bf t} = ({\bf \tilde{t}}_i)_{i\in[n]}$ the reports of all machines' costs. Given the declared values $\tilde{t}_i^j$, denote the output of a mechanism by $\mathbf p=(p_i^j)_{i\in[n]. j\in[m]}$, where $p_i^j$ is the probability of machine $i$ getting allocated to execute task $j$ in randomized mechanisms. We follow the literature \cite{DBLP:journals/mst/Koutsoupias14} in which the cost of machine $i$ for task $j$ is $\max\{t_i^j,\tilde{t}_i^j\}$. So, if a machine $i$ declares $\tilde{t}_i^j \ge t_i^j$ and it is allocated the task, then its actual cost is the declared value $\tilde{t}_i^j$ and not $t_i^j$. This is on the understanding that machines are being observed during the execution of the task and cannot afford to be caught lying about the execution times (a high penalty would apply). Therefore, the expected cost of machine $i$ is $c_i=c_i({\bf t}_i,\tilde{\bf t}) = \sum_{j \in[m]} p_i^j(\tilde{\bf t}) \max(t_i^j,\tilde{t}_i^j)$. In approximate mechanism design, we restrict our interest to the class of \emph{truthful mechanisms}. A mechanism is \textit{truthful} if for any report of other machines, the expected cost $c_i$ of machine $i$ is minimized when ${\bf \tilde{t}}_i = {\bf t}_i$. 

The canonical measure of efficiency of a truthful mechanism $\mathrm{M}$ is the \emph{worst-case approximation ratio},
\begin{equation*}
r_{\text{worst}}(\mathrm{M}) = \sup_{{\bf t} \in \mathcal{T}} \frac{SC_{\mathrm M}({\bf t})}{SC_{\mathrm{OPT}}({\bf t})},
\end{equation*}
where $SC_{\mathrm{OPT}}({\bf t})= \min_{\bf p \in \mathcal{P}}\sum_{i=1}^{n}c_i$ is the optimal social cost; $SC_{\mathrm M}({\bf t})$ is the social cost of the mechanism $\mathrm{M}$ on input $\bf t$; and $\mathcal{T}$ is the input space and $\mathcal{P}$ is the allocation probability space. This ratio compares the social cost of a truthful mechanism $\mathrm{M}$ against the social cost of the optimal cost $\mathrm{OPT}$ over all possible inputs $\bf t$. 

Analogous to the worst-case ratio, the \emph{average-case approximation ratio} is a pointwise metric that measures, in expectation, the ratio of the two costs. That is,
\begin{equation*}
r_{\text{average}}(\mathrm{M}) = \mathop{\mathbb{E}}_{{\bf t} \sim \mathrm{D}} \left[ \frac{SC_{\mathrm{M}}(\mathrm{\mathbf t})}{SC_{\mathrm{OPT}}(\mathrm{\mathbf t})} \right] ,
\end{equation*}
where the input $\mathbf t$ follows a distribution $\mathrm{D}$. 

\citeauthor{DBLP:journals/mst/Koutsoupias14} [2014] devises the following two truthful mechanisms for the case that there is only one task. For simplicity, in the single-task setting where $m=1$, we drop the superscript $j$. So, the inputs of the mechanisms are costs $\mathbf t=(t_1,\ldots,t_n)$. 

{\bf Mechanism K}: 
Given the machines' costs $\mathbf t=(t_1,\ldots,t_n)$, without loss of generality, let the values of $t_i$'s be in ascending order $0< t_1 \le t_2 \le \cdots \le t_n$. Then the allocation probabilities are 
\begin{align*}
p_1^K &=\frac{1}{t_1} \int_{0}^{t_1} \prod_{i=2}^{n} \Big(1-\frac{y}{t_i}\Big) dy, \\
p_k^K &=\frac{1}{t_1 t_k} \int_{0}^{t_1} \int_{0}^{y} \prod_{\substack{i=2,\ldots,n \\ i\neq k}} \Big( 1-\frac{x}{t_i} \Big) dx dy, \text{for} \,\ k\neq 1.
\end{align*}

\citeauthor{DBLP:journals/mst/Koutsoupias14} [2014] proves that this mechanism achieves a worst-case approximation ratio upper bound of $\frac{n+1}{2}$ and no other truthful mechanism can do better than $\frac{n+1}{2}$ . For the multi-task setting, we can obtain a truthful mechanism by running mechanism K independently for every task and it retains the approximation ratio $\frac{n+1}{2}$ for social cost minimization. \citeauthor{DBLP:conf/aaai/Zhang18} [2018] further shows that the average-case approximation ratio of mechanism K is upper bound by a constant, when the costs are independent and identically distributed, for the single-task case. 


{\bf Mechanism P}: 
Given the input $\mathbf t=(t_1,\ldots,t_n)$, the mechanism allocates the task to machine $i$ with probability \textit{inversely proportional} to the costs $t_i$, i.e., $p_i^P = \frac{t_i^{-1}}{\sum_{k=1}^n t_k^{-1}}$, $i \in [n]$. 

It is easy to verify that this mechanism is truthful as well. The worst-case approximation ratio of mechanism P is $n$ which is worse than mechanism K.




\section{Pointwise Comparison}
In this section, we show that the social cost attainable by mechanism K is always smaller than that of mechanism P, for any input. That is, mechanism K is universally better than mechanism P. 

For any single task, given the inputs $\mathbf t=(t_1,\ldots,t_n)$, let $t_{(1)} \leq t_{(2)} \leq \cdots \leq t_{(n)}$ be the ascending order statistics of $t_1,..., t_n$. First, we notice that we can rewrite the allocation probabilities of mechanism K so that the double integrals become single integrals. That is, for $k=2,\cdots,n$, 
\begin{align*}
p_{(k)}^K &=\frac{1}{ t_{(1)} t_{(k)}} \int_{0}^{t_{(1)}} \int_{0}^{y} \prod_{\substack{i=2,\ldots,n \\ i\neq k}} \Big( 1-\frac{x}{t_{(i)}} \Big) dx dy  \\
    &= \frac{1}{t_{(1)} t_{(k)}} \int_{0}^{t_{(1)}} \int_{x}^{t_{(1)}} \prod_{\substack{i=2,\ldots,n \\ i\neq k}} \Big( 1-\frac{x}{t_{(i)}} \Big) dy dx  \\
    &= \frac{1}{ t_{(k)}} \int_{0}^{t_{(1)}}  \prod_{\substack{ i\neq k}} \Big( 1-\frac{x}{t_{(i)}} \Big) dx .
\end{align*}

Therefore, the allocation probabilities of mechanism K can be summarized in one equation of single integrals as, 
\begin{align}\label{def: M_allocation_prob0}
p^K_{(k)} = \frac{1}{ t_{(k)}} \int_{0}^{t_{(1)}}  \prod_{\substack{ i\neq k}} \Big( 1-\frac{x}{t_{(i)}} \Big) dx, \,\ \forall k \in [n].
\end{align}

Recall the allocation probabilities of mechanism P are 
\begin{align*}
p^P_{(k)} = \frac{t_{(k)}^{-1}}{\sum_{i=1}^n t_{(i)}^{-1}}, \,\ \forall k \in [n] .
\end{align*}

With the succinct representation of $p^K_{(k)}$ and expressing the allocation probabilities of both mechanisms in terms of the order statistics of the input, we can show that there exists a threshold index, before which the allocation probabilities of mechanism K are larger than that of mechanism P, and vice versa afterwards.  

\begin{lemma}
\label{lemma: compare_MandP}
Given any input $\mathbf t=(t_{(1)},\ldots,t_{(n)})$, there exists an index $l$, where $1 \leq l \leq n$,  such that $p^K_{(k)} \geq p^P_{(k)}$ for all $k \leq l$ and $p^K_{(k)} < p^P_{(k)}$ for all $k> l$.
\end{lemma}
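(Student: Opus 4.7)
The plan is to analyze the pointwise ratio $r_k := p^K_{(k)}/p^P_{(k)}$ as a function of $k$ and show it is monotonically non-increasing in $k$. Since both $\{p^K_{(k)}\}$ and $\{p^P_{(k)}\}$ are probability distributions summing to $1$, this monotonicity will force the existence of a threshold index $l$ above which the ratio drops below $1$, which is exactly the statement of the lemma.

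First I would cancel the shared factor of $t_{(k)}^{-1}$ that appears in both expressions, yielding
\[
r_k = \Bigl(\sum_{i=1}^{n} t_{(i)}^{-1}\Bigr) \cdot I_k, \qquad I_k := \int_{0}^{t_{(1)}} \prod_{i \ne k} \Bigl(1 - \tfrac{x}{t_{(i)}}\Bigr)\, dx.
\]
The prefactor does not depend on $k$, so monotonicity of $r_k$ reduces to monotonicity of $I_k$ in $k$.

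Next, for any pair of indices $k_1 < k_2$, I would pull out the product over the common index set $\{1,\ldots,n\} \setminus \{k_1,k_2\}$, which gives
\[
I_{k_1} - I_{k_2} = \int_{0}^{t_{(1)}} \Bigl(\prod_{i \notin \{k_1,k_2\}} \bigl(1 - \tfrac{x}{t_{(i)}}\bigr)\Bigr) \cdot x\Bigl(\tfrac{1}{t_{(k_1)}} - \tfrac{1}{t_{(k_2)}}\Bigr)\, dx.
\]
On $x \in [0, t_{(1)}]$ each factor $1 - x/t_{(i)}$ is non-negative since $t_{(i)} \ge t_{(1)} \ge x$, so the outer product is non-negative; and $t_{(k_1)} \le t_{(k_2)}$ forces $1/t_{(k_1)} - 1/t_{(k_2)} \ge 0$. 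Hence $I_{k_1} \ge I_{k_2}$, and therefore $r_{k_1} \ge r_{k_2}$.

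Finally, because $\sum_k p^K_{(k)} = \sum_k p^P_{(k)} = 1$, the sequence $\{r_k\}$ cannot be entirely strictly greater than $1$ nor entirely strictly less than $1$. Combined with the monotonicity just established, I can define $l$ to be the largest index in $[n]$ with $r_l \ge 1$; then $r_k \ge r_l \ge 1$ for every $k \le l$, and by maximality of $l$ together with monotonicity $r_k < 1$ for every $k > l$, which yields the claim. The main obstacle is the factorization step used to compare $I_{k_1}$ and $I_{k_2}$; once the shared factors are pulled outside the integral, the sign analysis is immediate from the ordering of the $t_{(i)}$'s and the fact that the integration variable stays below $t_{(1)}$.
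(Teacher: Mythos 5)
Your proof is correct and takes essentially the same approach as the paper: establishing that the ratio $p^K_{(k)}/p^P_{(k)}$ is non-increasing in $k$ (the paper phrases this as $p^K_{(k)}/p^K_{(k')} \ge p^P_{(k)}/p^P_{(k')}$, which is algebraically equivalent, and likewise relies on the pointwise comparison of the factors $1 - x/t_{(i)}$ inside the integral), and then using $\sum_k p^K_{(k)} = \sum_k p^P_{(k)} = 1$ to locate the largest index $l$ with ratio at least $1$.
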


\begin{proof}
For any $k$ and $k'$ such that $1 \leq k < k' \leq n$, since $ 1 - \frac{x}{t_{(k)}} \le 1 - \frac{x}{t_{(k')}}$, $\forall x>0$, we have 
\begin{align*}
\frac{p^K_{(k)}}{p^K_{(k')}} = \frac{t_{(k)}^{-1}}{t_{(k')}^{-1}} \cdot \frac{\int_{0}^{t_{(1)}}  \prod_{\substack{ i\neq k}} \Big( 1-\frac{x}{t_{(i)}} \Big) dx}{\int_{0}^{t_{(1)}}  \prod_{\substack{ i\neq k'}} \Big( 1-\frac{x}{t_{(i)}} \Big) dx} \ge \frac{t_{(k)}^{-1}}{t_{(k')}^{-1}} \cdot 1 = \frac{p^P_{(k)}}{p^P_{(k')}}.
\end{align*}
So, when $p^K_{(k')} > p^P_{(k')}$, we have 
\begin{equation}
 p^K_{(k)} \ge p^K_{(k')} \cdot \frac{p^P_{(k)}}{p^P_{(k')}} = p^P_{(k)} \cdot  \frac{p^K_{(k')}}{p^P_{(k')}}  > p^P_{(k)}.
 \label{leq: M_P}
\end{equation}
When $p^K_{(k)} < p^P_{(k)}$, we have 
\begin{equation}
  p^K_{(k')} \le p^K_{(k)} \cdot \frac{p^P_{(k')}}{p^P_{(k)}} =  p^P_{(k')} \cdot \frac{p^K_{(k)}}{p^P_{(k)}}  < p^P_{(k')} .
   \label{geq: M_P}
\end{equation}

Since $\sum_{k =1}^n p^K_{(k)} = \sum_{k =1}^n p^P_{(k)} =1$, there exists $l$, where $1 \leq l \leq n$, such that $p^K_{(l)} \geq p^P_{(l)}$. Let $l$ be the largest integer between $1$ and $n$ such that $p^K_{(l)} \geq p^P_{(l)}$. By combining (\ref{leq: M_P}) and (\ref{geq: M_P}), we conclude that $p^K_{(k)} \geq p^P_{(k)}$ for all $k \leq l$ and $p^K_{(k)} < p^P_{(k)}$ for all $k> l$.
\end{proof}

Next, we are ready to show that mechanism K is better than mechanism P, for every input. 
\begin{theorem}
\label{theo: MandP}
Given any input $\mathbf t=(t_{(1)},\ldots,t_{(n)})$, mechanism K is always better than mechanism P. That is,
\begin{align*}
SC_{K}({\bf t}) \le SC_{P}({\bf t}). 
\end{align*}
\end{theorem}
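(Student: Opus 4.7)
The plan is to rewrite each mechanism's social cost as a weighted sum of the sorted costs and then apply the sign structure from Lemma~\ref{lemma: compare_MandP}. In the single-task case, truth-telling gives
\[
SC_M(\mathbf{t}) \;=\; \sum_{k=1}^n p^M_{(k)} \, t_{(k)},
\]
and the multi-task case follows by summing this identity task by task, since both mechanisms run independently across tasks. So it suffices to show, for a single task, that $\sum_{k=1}^n d_k \, t_{(k)} \le 0$ where $d_k := p^K_{(k)} - p^P_{(k)}$.

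First I would record two immediate facts about $(d_k)$: its total is zero, because both $p^K$ and $p^P$ sum to $1$; and by Lemma~\ref{lemma: compare_MandP} it is single-crossing, with $d_k \ge 0$ for $k \le l$ and $d_k < 0$ for $k > l$ for some threshold index $l$. These two facts, combined with $t_{(k)}$ being non-decreasing in $k$, are exactly what a rearrangement-type inequality consumes.

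The finishing step is a short calculation. Because $t_{(k)} \le t_{(l)}$ when $k \le l$ and $t_{(k)} \ge t_{(l)}$ when $k > l$, and because $d_k$ has the opposite sign pattern in the two regimes, the inequality $d_k t_{(k)} \le d_k t_{(l)}$ holds term by term (the direction is preserved in the first range and flipped back by the negative factor in the second). Summing,
\[
\sum_{k=1}^n d_k \, t_{(k)} \;\le\; t_{(l)} \sum_{k=1}^n d_k \;=\; 0,
\]
which is exactly $SC_K(\mathbf{t}) \le SC_P(\mathbf{t})$.

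The main obstacle is really already absorbed into Lemma~\ref{lemma: compare_MandP}; without the single-crossing structure, one would have to compare the integral expression (\ref{def: M_allocation_prob0}) against the reciprocal weights of mechanism P directly, which is not obviously tractable. Once the crossing point $l$ is handed to us, the remaining transport-style bound is routine, with the only care needed being to track the sign of $d_k$ when passing through the negative factor in the second regime.
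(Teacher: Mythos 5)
Your proposal is correct and follows essentially the same route as the paper: both rely on the single-crossing structure from Lemma~\ref{lemma: compare_MandP} together with $\sum_k p^K_{(k)} = \sum_k p^P_{(k)} = 1$, and then compare against a pivot value of $t$. The only cosmetic difference is that you pivot every term at $t_{(l)}$ and invoke $\sum_k d_k = 0$ directly, whereas the paper pivots the two regimes at $t_{(l)}$ and $t_{(l+1)}$ separately before combining; both are valid.
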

\begin{proof}
Since $\sum_{k =1}^n p^K_{(k)} = \sum_{k =1}^n p^P_{(k)} =1$, together with Lemma~\ref{lemma: compare_MandP}, we have that $\sum_{k \leq l }  ( p^K_{(k)} - p^P_{(k)} ) = \sum_{k > l } ( p^P_{(k)} - p^K_{(k)}  ) > 0$. Hence,
\begin{align*}
SC_{K}({\bf t}) - SC_{P}({\bf t}) &= \sum_{k=1}^n p^K_{(k)} t_{(k)} - \sum_{k=1}^n p^P_{(k)} t_{(k)} \\
&= \sum_{k \leq l }  ( p^K_{(k)} - p^P_{(k)} ) t_{(k)}  - \sum_{k > l } ( p^P_{(k)} - p^K_{(k)}  )t_{(k)}  \\
& \leq \sum_{k \leq l }  ( p^K_{(k)} - p^P_{(k)} ) t_{(l)} - \sum_{k > l } ( p^P_{(k)} - p^K_{(k)}  )t_{(l +1)} \\
& = \sum_{k \leq l }  ( p^K_{(k)} - p^P_{(k)} ) t_{(l)} - \sum_{k \leq l }  ( p^K_{(k)} - p^P_{(k)} ) t_{(l +1)} \\
& = \sum_{k \leq l }  ( p^K_{(k)} - p^P_{(k)} ) ( t_{(l)} - t_{(l +1)} ) \le 0.
\end{align*}
\end{proof}


\par

\section{Average-case Approximation Ratio}
In this section, we show that the average-case approximation ratio of mechanism K converges to a constant when the number of machines $n$ approaches infinity. 

Denote ${\bf t}^j = ( t^j_1,\cdots,t^j_n )$ the $n$ machines' execution costs to process task $j$. For any task $j \in [m]$, machines' execution costs $t_i^j$ are independent and identically distributed. Denote $t_{\min}^j$ the minimum cost for any machines to process task $j$. Denote $F^j(t)$ the cumulative distribution function (CDF) of distribution $D[t_{\min}^j, +\infty)$. Let $t^j_{(1)} \leq t^j_{(2)} \leq \cdots \leq t^j_{(n)}$ be the ascending order statistics of $t^j_1,..., t^j_n$.  Without loss of generality, we assume that $t^j_{(1)} = t^j_1$. These are valid assumptions as mechanism K is independently running for every task. Let $s$ be a constant, where $s > t^j_{\min}$. Conditional on $t^j_{(1)} \equiv s$, the other $n-1$ machines' costs to process task $j$, $t^j_2,...,t^j_n$, are independent and identically distributed, and follow the conditional CDF,
\begin{equation}
G^j_s (t) := \frac{F^j(t) - F^j(s)}{1 - F^j(s)},\ \text{for}\ t \ge s . \quad 
\label{def: condiCDF}
\end{equation}
For simplicity, denote the expected value
\begin{equation}
\mu^j_{s}: = \mathop{\mathbb{E}}_{t \sim G^j_{s} } \Big[\frac{s}{t}\Big] . \quad 
\label{def: mu}
\end{equation}

The average-case approximation ratio of mechanism K can be rewritten as
\begin{equation}
\begin{split}
& r_{average}(K): = \mathop{\mathbb{E}}_{t_i^j \sim F^j}   \left[ \frac{  \sum_{j \in [m]} SC_{K}({\bf t}^j) }{ \sum_{j \in [m]} SC_{OPT}({\bf t}^j) } \right] .
\label{def: r_M}
\end{split}
\end{equation}


First, by a change of variables, we transform the allocation probabilities of mechanism K in (\ref{def: M_allocation_prob0}) to the following expression for ease of analysis:

\begin{equation}\label{def: M_allocation_prob}
  p_{(k)}^K =  \frac{t_{(1)}}{ t_{(k)}} \int_{0}^{1}  \prod_{\substack{ i\neq k}} \Big( 1-\frac{t_{(1)}}{t_{(i)}} \cdot x \Big) dx , \,\ \forall k \in [n] .  
\end{equation}

Second, we use these allocation probabilities to bound the expected social cost of mechanism K from above and below. With our careful design of the conditional CDF $G^j_s (t)$ and the expected value $\mu^j_{s}$, we are able to show that the lower and upper bounds of the expected social cost of mechanism K only differ slightly. This result will facilitate us squeezing the two bounds and eventually prove the convergence of the average-case approximation ratio.

\begin{lemma}\label{lemma: condi_exp_M}
For any task $j$, given that $t^j_{(1)} \equiv s > t^j_{\min}$ is a constant, the expectation of the social cost of mechanism K satisfies that
\begin{equation*}
\frac{s}{\mu^j_{s}}  \left(1 - \frac{1}{n \cdot \mu^j_{s} }  \right)<  \mathop{\mathbb{E}}  \left[ \left. SC_{K}({\bf t}^j)   \right| t^j_{(1)} = s  \right] < \frac{s}{\mu^j_{s}} .
\end{equation*}
\end{lemma}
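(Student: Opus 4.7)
The plan is to derive an exact closed form for the conditional expectation $\mathop{\mathbb{E}}[SC_K({\bf t}^j) \mid t^j_{(1)} = s]$ and then sandwich it between the two stated bounds by crudely estimating a single correction term. The first step is to use (\ref{def: M_allocation_prob}) to pull the common factor $t^j_{(1)} = s$ out of $\sum_k p^K_{(k)} t^j_{(k)}$ and write
\[
SC_K({\bf t}^j) = s \int_{0}^{1} \sum_{k=1}^{n} \prod_{i \ne k} \Bigl(1 - \frac{s}{t^j_{(i)}} x \Bigr)\,dx,
\]
noting that the integrand is symmetric in the costs, so after conditioning on $t^j_{(1)} = s$ the remaining costs can be treated as unordered i.i.d.\ samples from $G^j_s$.

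Next, I would push the expectation inside the integral (Fubini) and the sum (linearity), then use independence together with the identity $\mathop{\mathbb{E}}_{t \sim G^j_s}[1 - \frac{s}{t} x] = 1 - \mu^j_s x$ to factorize each inner expectation. Writing $\mu = \mu^j_s$, the $k = 1$ term contributes $(1 - \mu x)^{n-1}$ and each of the $n-1$ terms with $k \ge 2$ contributes $(1-x)(1-\mu x)^{n-2}$. Thus the computation reduces to the two integrals $A := \int_{0}^{1} (1 - \mu x)^{n-1}\,dx$ and $B := \int_{0}^{1} (1 - x)(1 - \mu x)^{n-2}\,dx$. Direct antidifferentiation yields $A = \frac{1 - (1 - \mu)^{n}}{n \mu}$, and an integration by parts (differentiating $1 - x$ and antidifferentiating $(1 - \mu x)^{n-2}$) produces the clean identity $(n-1) B = (1 - A)/\mu$.

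Assembling these pieces yields the closed form
\[
\mathop{\mathbb{E}}\bigl[SC_K({\bf t}^j) \mid t^j_{(1)} = s\bigr] = s\bigl(A + (n-1) B\bigr) = \frac{s}{\mu}\bigl(1 - A(1 - \mu)\bigr),
\]
after which both stated bounds fall out from crude estimates on the correction term $A(1-\mu)$. For the upper bound, $A > 0$ and $\mu < 1$ (non-degenerate $G^j_s$) give $A(1 - \mu) > 0$, hence $\mathop{\mathbb{E}}[\cdot] < s/\mu$. For the lower bound, $A \le \tfrac{1}{n \mu}$ (since $(1-\mu)^{n} \ge 0$) combined with $1 - \mu < 1$ yields $A(1-\mu) < \tfrac{1}{n\mu}$, giving the strict inequality $\mathop{\mathbb{E}}[\cdot] > \frac{s}{\mu}\bigl(1 - \tfrac{1}{n \mu}\bigr)$.

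I expect the main technical obstacle to be identifying the integration by parts that collapses $B$ into a simple multiple of $1 - A$; once that identity is secured, the closed form drops out in a single line and both inequalities are one-line estimates. The remaining ingredients---the reduction to i.i.d.\ samples via symmetry, and the Fubini/linearity swap---are mechanical.
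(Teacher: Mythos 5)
Your proposal is correct and follows essentially the same route as the paper: factorize the conditional expectation via independence into the two integrals $\int_0^1(1-\mu x)^{n-1}dx$ and $(n-1)\int_0^1(1-x)(1-\mu x)^{n-2}dx$, obtain the closed form $\frac{s}{\mu}\bigl(1-\frac{(1-\mu)(1-(1-\mu)^n)}{n\mu}\bigr)$, and bound the correction term using $0<\mu<1$. Your explicit integration-by-parts identity $(n-1)B=(1-A)/\mu$ is just a more detailed account of the computation the paper states directly.
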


\begin{proof}
First, for any task $j$, by substituting allocation probabilities (\ref{def: M_allocation_prob}) into the social cost of mechanism K, we have that
\begin{align}
SC_{K}({\bf t}^j) &= \sum_{k =1}^{n} p_{(k)} t_{(k)}^j \nonumber \\ 
& = t_{(1)}^j  \sum_{k =1}^{n} \int_{0}^{1}  \prod_{\substack{ i\neq k}} \Big( 1-\frac{t_{(1)}^j}{t_{(i)}^j} \cdot x \Big) dx . 
\label{bound: SCM_expectation}
\end{align}

Given that $t^j_{(1)} =s$ is a constant, we plug (\ref{bound: SCM_expectation}) into the expected value of social cost of mechanism K and obtain that


\begin{align}
\mathop{\mathbb{E}}_{t_{(i)}^j, i=2,\cdots,n }  \left[ \left. SC_{K}({\bf t}^j)   \right| t^j_{(1)} =s   \right] &= \mathop{\mathbb{E}} \left[t^j_{(1)}  \sum_{k =1}^{n} \int_{0}^{1}  \left.\prod_{\substack{ i\neq k}} \Big( 1-\frac{t^j_{(1)}}{t^j_{(i)}} \cdot x \Big)dx \right| t^j_{(1)} =s \right] \nonumber \\
&= s  \sum_{k =1}^{n} \int_{0}^{1}  \mathop{\mathbb{E}} \left[\left.\prod_{\substack{ i\neq k}} \Big( 1-\frac{s}{t^j_{i}} \cdot x \Big) \right| t^j_{(1)} =s  \right]dx \nonumber \\ 
&= s  \sum_{k =1}^{n} \int_{0}^{1}  \prod_{\substack{ i\neq k}} \mathop{\mathbb{E}} \left[\left. \Big( 1-\frac{s}{t^j_{i}} \cdot x \Big) \right| t^j_{(1)} =s \right]dx .
\label{eq: SCM prior}
\end{align}
By plugging the expected value (\ref{def: mu}) into (\ref{eq: SCM prior}), together with the fact that $t^j_2,...,t^j_n$ are independent variables, we work out the integration as
\begin{align}
\mathop{\mathbb{E}}_{t_{(i)}^j, i=2,\cdots,n }  \left[ \left. SC_{K}({\bf t}^j)   \right| t^j_{(1)} =s   \right] & = s\left\{\int_{0}^{1}(1- \mu^j_{s}x)^{n-1} dx  + (n-1)\int_{0}^{1}(1-x)(1- \mu^j_{s}x)^{n-2} dx \right\}  \quad \quad \quad \quad\   \nonumber \\ 
& = \frac{s}{\mu^j_{s}} \left[1 - \frac{1}{n} \cdot \frac{(1 - \mu^j_{s}) ( 1 - (1 -\mu^j_{s})^n )}{ \mu^j_{s}}  \right] .
\label{eq: SCM}
\end{align}

Since $t^j_i \ge s$, when $i = 2, \cdots, n$, we know that 
$$0 < \mu^j_{s} < 1.$$ \\
Therefore, 
\begin{align*}
0 < \frac{(1 - \mu^j_{s}) ( 1 - (1 -\mu^j_{s})^n )}{ \mu^j_{s}} < \frac{1}{\mu^j_{s}} .
\end{align*}
Hence, 
\begin{equation*}
\frac{s}{\mu^j_{s}}  \left(1 - \frac{1}{n \cdot \mu^j_{s} }  \right)<  \mathop{\mathbb{E}}  \left[ \left. SC_{K}({\bf t}^j)   \right| t^j_{(1)} = s  \right] < \frac{s}{\mu^j_{s}} .
\end{equation*}
\end{proof}

Finally, we obtain the main result of this section. The proof of the following theorem is deferred to the Appendix.

\begin{theorem}
\label{main_theo: M}
As $n$ approaches infinity, the average-case approximation ratio of mechanism K converges to
\begin{equation*}
r_{average}(K) \sim \frac{\sum_{j \in[m]} (\mathop{\mathbb{E}}_{t \sim F^j } \left[ \frac{1}{t} \right] )^{-1}}{\sum_{j \in[m]} t^{j}_{\min} }.
\end{equation*}
\end{theorem}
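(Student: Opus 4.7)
The plan is to apply Lemma~\ref{lemma: condi_exp_M} in its integrated (unconditional) form and then perform an asymptotic analysis as $n \to \infty$, driven by the fact that the minimum order statistic $t^j_{(1)}$ concentrates on $t^j_{\min}$. Since mechanism K runs independently per task and the welfare-optimal schedule assigns each task to its cheapest machine, $SC_{\mathrm{OPT}}(\mathbf{t}^j) = t^j_{(1)}$, so
\[
r_{\text{average}}(K) = \mathbb{E}\!\left[\frac{\sum_{j=1}^{m} SC_K(\mathbf{t}^j)}{\sum_{j=1}^{m} t^j_{(1)}}\right].
\]
Applying Lemma~\ref{lemma: condi_exp_M} together with the tower rule, $\mathbb{E}[SC_K(\mathbf{t}^j)]$ is squeezed between $\mathbb{E}\!\left[\tfrac{t^j_{(1)}}{\mu^j_{t^j_{(1)}}}\bigl(1 - \tfrac{1}{n\mu^j_{t^j_{(1)}}}\bigr)\right]$ and $\mathbb{E}\!\left[\tfrac{t^j_{(1)}}{\mu^j_{t^j_{(1)}}}\right]$, reducing the problem to the asymptotics of $t^j_{(1)} / \mu^j_{t^j_{(1)}}$.

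Next, since $t^j_{(1)}$ is the minimum of $n$ i.i.d.\ samples from a distribution supported on $[t^j_{\min}, \infty)$, we have $t^j_{(1)} \to t^j_{\min}$ almost surely. The conditional CDF $G^j_s$ converges weakly to $F^j$ as $s \downarrow t^j_{\min}$, and since $s/t \in [0, 1]$ for $t \ge s$, bounded convergence gives $\mu^j_{t^j_{(1)}} \to t^j_{\min} \cdot \mathbb{E}_{t \sim F^j}[1/t]$ almost surely. Therefore $t^j_{(1)} / \mu^j_{t^j_{(1)}} \to (\mathbb{E}_{t \sim F^j}[1/t])^{-1}$ almost surely, and the additive error $1/(n \mu^j_{t^j_{(1)}})$ also vanishes. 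A dominated-convergence step---using that $\mathbb{P}(t^j_{(1)} \ge t^j_{\min} + \varepsilon) = (1 - F^j(t^j_{\min} + \varepsilon))^n$ decays exponentially in $n$, so the atypical large-$s$ contribution to $s/\mu^j_s$ can be controlled---promotes these pointwise limits to the per-task expectations, giving $\mathbb{E}[SC_K(\mathbf{t}^j)] \to (\mathbb{E}_{F^j}[1/t])^{-1}$ and $\mathbb{E}[t^j_{(1)}] \to t^j_{\min}$.

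The main obstacle is converting these per-task convergences into a limit for the expected ratio of sums. My plan is to first establish convergence in probability of both sums: $\sum_j t^j_{(1)} \to \sum_j t^j_{\min}$ holds almost surely, while $\sum_j SC_K(\mathbf{t}^j) \to \sum_j (\mathbb{E}_{F^j}[1/t])^{-1}$ in probability requires a second-moment concentration bound on $SC_K(\mathbf{t}^j)$ around its conditional mean given $t^j_{(1)}$, derivable by an analogue of the computation in Lemma~\ref{lemma: condi_exp_M} applied to $SC_K(\mathbf{t}^j)^2$ and exploiting independence of the $t^j_i$ given $t^j_{(1)}$. By the continuous-mapping theorem, the ratio converges in probability to the stated limit, and a uniform-integrability bound---justified because the ratio lies in $\bigl[1,\, \sum_j t^j_{(n)} / \sum_j t^j_{\min}\bigr]$ with controllable tails on the maximum order statistic under mild moment assumptions on $F^j$---upgrades this to convergence of expectations, yielding the claimed asymptotic equivalence.
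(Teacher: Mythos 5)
Your overall architecture is sound and genuinely different from the paper's, but it is also harder than it needs to be, and the two places where you lean on asserted-but-unproved machinery are exactly the places the paper's route avoids. The key observation you miss is that the denominator $\sum_j SC_{\mathrm{OPT}}(\mathbf{t}^j)=\sum_j t^j_{(1)}$ is measurable with respect to the conditioning $\sigma$-field generated by $(t^j_{(1)})_{j\in[m]}$. Hence the tower rule applied to the \emph{ratio} (not merely to the numerator) gives
\[
r_{\text{average}}(K)=\mathop{\mathbb{E}}_{(t^j_{(1)})_j}\left[\frac{\sum_{j}\mathbb{E}\bigl[SC_K(\mathbf{t}^j)\mid t^j_{(1)}\bigr]}{\sum_j t^j_{(1)}}\right],
\]
so Lemma~\ref{lemma: condi_exp_M} sandwiches the conditional expectation of the ratio directly, with no need for the numerator to concentrate around its conditional mean. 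The paper then finishes by splitting on the event $\bigcap_j\{t^j_{(1)}\le t^j_{\min}+\delta\}$ versus its complement, using the deterministic worst-case bound $\tfrac{n+1}{2}$ on the complement (whose probability $\sum_j(1-F^j(t^j_{\min}+\delta))^n$ decays geometrically) and letting $\delta\downarrow 0$. Your route instead requires convergence in probability of $\sum_j SC_K(\mathbf{t}^j)$ itself, which does need the second-moment concentration bound you invoke. That bound is plausible, but it is a genuine computation (second moments of sums of integrals of products of the $t^j_i$), it is only asserted in your plan, and it is the main technical burden of your approach; nothing in the paper establishes it.

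The second soft spot is the uniform-integrability step. Bounding the ratio by $\sum_j t^j_{(n)}/\sum_j t^j_{\min}$ and controlling tails of the maximum order statistic requires moment assumptions on $F^j$ that the theorem does not make (the statement only needs $\mathbb{E}_{F^j}[1/t]$, which is automatically finite since $t\ge t^j_{\min}>0$; a heavy-tailed $F^j$ with infinite mean is perfectly admissible). The correct tool here is the deterministic pointwise bound $SC_K(\mathbf{t}^j)\le\tfrac{n+1}{2}\,t^j_{(1)}$, which caps the ratio at $\tfrac{n+1}{2}$ regardless of $F^j$ and, combined with the geometric tail of $t^j_{(1)}$, yields uniform integrability with no extra hypotheses --- this is precisely the estimate the paper uses on the bad event. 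Your per-task asymptotics ($t^j_{(1)}\to t^j_{\min}$, $\mu^j_{t^j_{(1)}}\to t^j_{\min}\,\mathbb{E}_{F^j}[1/t]$, and the vanishing of the $1/(n\mu^j_{t^j_{(1)}})$ correction) are all correct and mirror the paper's computations. To make your write-up airtight you would either have to carry out the variance bound and replace the UI argument, or, more simply, adopt the conditioning identity above and collapse your proof onto the paper's.
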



\subsection{Interpreting the distribution dependent ratio}
In Theorem~\ref{main_theo: M}, we show the convergence of the average-case approximation ratio of mechanism K, for the case that there are multiple tasks. Nevertheless, the ratio depends on the distribution $F_j$ and it is not straightforward to interpret its value. Therefore, we plug in a few standard distributions to better understand this bound in the following. 

\smallskip

\textbf{Pareto Distribution.} 
The Pareto distribution  is a power law distribution that is widely used in the description of social, scientific, geophysical, actuarial, and many other types of observable phenomena. According to \cite{DBLP:conf/sigmetrics/ArlittW96} and \cite{ReedJorgensen} as well as the references therein, the distributions of web server workload and of Internet traffic which uses the TCP protocol match well with the Pareto distribution.  

The cumulative distribution function of a Pareto distribution over the support $[t_{\min}, +\infty)$ is given by
\begin{equation*}
    F(t) = 1 - \Big(  \frac{t_{\min}}{t} \Big)^{\alpha}, \,\ t\ge t_{\min}
\end{equation*}
where $\alpha> 0$ is the tail index of the distribution. Since 
\begin{align*}
    \mathop{\mathbb{E}}_{t \sim F}\left[ \frac{t_{\min}}{t} \right] & = \int_{t_{\min}}^{ +\infty} \frac{ t_{\min} }{t}\ d F(t) \nonumber \\
    & = \alpha \cdot t_{\min}^{\alpha + 1} \int_{t_{\min}}^{ +\infty} \frac{1}{t^{\alpha + 2}} d t \nonumber \\
    & = \frac{ \alpha  }{\alpha + 1} ,
\end{align*}
and the fact that the ratio $r_{average}(K)$ in Theorem \ref{main_theo: M} degenerates to $\left(\mathop{\mathbb{E}}_{t \sim F}\left[ \frac{t_{\min}}{t} \right] \right)^{-1}$ when there is a single task, we have 
\begin{align*}
    r_{average}(K) \sim \left(\mathop{\mathbb{E}}_{t \sim F}\left[ \frac{t_{\min}}{t} \right] \right)^{-1} = \frac{\alpha + 1}{\alpha} . 
\end{align*}
We remark that the tail index $\alpha$ of many Pareto distributions we encounter in practice is a small constant, for which $r_{average}(K)$ will also be a small constant. For example, the Pareto principle, a.k.a. the 80/20 rule, corresponds to the case that $\alpha = \log_4 5 \approx 1.161$, with which $r_{average}(K) \approx 1.861$. 

In case of multiple tasks, the cumulative distribution functions are given by $F^j(t) = 1 - \Big(  \frac{t^j_{\min}}{t} \Big)^{\alpha_j}, \,\ t\ge t^j_{\min}$. The average-case approximation ratio becomes that $r_{average}(K) \sim 1 + \frac{\sum_{j \in[m]} 
t^{j}_{\min} / \alpha_j
}{\sum_{j \in[m]} t^{j}_{\min} }$.

\smallskip

\textbf{Exponential Distribution.} The cumulative distribution function of an exponential distribution over the support $[t_{\min}, +\infty)$ is given by
\begin{equation*}
    F(t) = 1 - e^{ - \lambda ( t - t_{\min})},
\end{equation*}
where $\lambda> 0$ is the  rate parameter of the distribution. We have that 
\begin{align*}
    \mathop{\mathbb{E}}_{t \sim F}\left[ \frac{t_{\min}}{t} \right] & = \int_{t_{\min}}^{ +\infty} \frac{ t_{\min} }{t}\ d F(t) \nonumber \\
    & =  \int_{t_{\min}}^{ +\infty} \frac{ t_{\min} }{t} \lambda \cdot e^{ - \lambda ( t - t_{\min})} dt \nonumber \\
    & \overset{r = \lambda t}{=} \lambda \cdot t_{\min} e^{ \lambda t_{\min} } \int_{\lambda t_{\min} }^{ +\infty} \frac{e^{-r}}{r}  dr \nonumber \\
    &= \lambda \cdot t_{\min} e^{ \lambda t_{\min} } E_1( \lambda t_{\min} ) \,\ ,
\end{align*}
where $E_1(x)= \int_{x }^{ +\infty} \frac{e^{-r}}{r}  dr = \frac{e^{-x}}{x} \sum_{n=0}^{+\infty} \frac{n!}{(-x)^n}$ is the exponential integral function. 
Take $\lambda t_{\min} = 2$, for example, $r_{average}(K) \sim (2e^2E_1(2))^{-1} \approx 1.384$.

To obtain a simpler representation of the bound, we note that \citeauthor{abramowitz1988handbook} [1988] 
bound $E_{1}(x)$ as follows
\begin{equation*}
    \frac{1}{2} e^{-x} \ln \Big( 1 + \frac{2}{x}\Big) < E_1(x) < e^{-x} \ln \Big( 1 + \frac{1}{x}\Big). 
\end{equation*}

Therefore, we have that
\begin{equation*}
    \frac{\lambda t_{\min}}{2} \ln \Big( 1 + \frac{2}{\lambda t_{\min}}\Big) < \mathop{\mathbb{E}}_{t \sim F}\left[ \frac{t_{\min}}{t} \right] <  \lambda t_{\min} \ln \Big( 1 + \frac{1}{\lambda t_{\min}}\Big). 
\end{equation*}

Hence, 
\begin{equation*}
    r_{average}(K) \sim \left(\mathop{\mathbb{E}}_{t \sim F}\left[ \frac{t_{\min}}{t} \right] \right)^{-1} \leq \frac{1}{\frac{\lambda t_{\min}}{2} \ln \Big( 1 + \frac{2}{\lambda t_{\min}}\Big) }.
\end{equation*}


\smallskip


\subsection{An Improved Bound for any Distribution.}
Note that when there is a single task, the ratio $r_{average}(K)$ in Theorem \ref{main_theo: M} degenerates to $\left(\mathop{\mathbb{E}}_{t \sim F}\left[ \frac{t_{\min}}{t} \right] \right)^{-1}$. 
This bound improves the constant bound result in \cite{DBLP:conf/aaai/Zhang18}, for any distribution. For a comparison, recall the following result.

\begin{theorem} \cite{DBLP:conf/aaai/Zhang18}
For any distribution $F$ on $[t_{\min}, +\infty]$ and a constant $h$ such that $F(h t_{\min}) \ge \frac{11}{12}$, it holds that~ $r_{average}(K) < 2h + 1.33$.
\end{theorem}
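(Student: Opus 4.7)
The plan is to use Lemma 2 as the workhorse: since $SC_{OPT}({\bf t}) = t_{(1)}$ in the single-task setting, Lemma 2 gives
\[
r_{average}(K) = \mathop{\mathbb{E}}\left[\frac{SC_{K}({\bf t})}{t_{(1)}}\right] < \mathop{\mathbb{E}}\left[\frac{1}{\mu_{t_{(1)}}}\right],
\]
where the outer expectation is over the distribution of $t_{(1)}$. My strategy is then to bound the right-hand side by splitting the domain of $t_{(1)}$ at the threshold $h t_{\min}$. The ``typical'' regime $t_{(1)} \le h t_{\min}$ should contribute roughly $2h$, while the ``rare'' regime $t_{(1)} > h t_{\min}$ should contribute the additive constant.

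For the rare regime, independence of the $t_i$'s together with $F(h t_{\min}) \ge 11/12$ yields
\[
\Pr\!\left(t_{(1)} > h t_{\min}\right) = \bigl(1 - F(h t_{\min})\bigr)^n \le (1/12)^n.
\]
Combining this with the worst-case approximation ratio $\frac{n+1}{2}$ from \citeauthor{DBLP:journals/mst/Koutsoupias14} [2014] bounds the contribution of this regime by $\frac{n+1}{2} (1/12)^n$, which is a small constant for every $n \ge 1$ (maximised near $n = 1$). A careful accounting of the residual terms from the typical regime together with this rare-regime slack should yield the additive constant $1.33$.

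For the typical regime $t_{(1)} = s \in [t_{\min}, h t_{\min}]$, the key step is to lower bound $\mu_s = \mathop{\mathbb{E}}_{t \sim G_s}[s/t]$. Restricting the integrand to $t \in [s, h t_{\min}]$, on which $s/t \ge s/(h t_{\min}) \ge 1/h$, gives
\[
\mu_s \ \ge\ \frac{1}{h}\cdot\frac{F(h t_{\min}) - F(s)}{1 - F(s)}.
\]
When $F(s)$ is comfortably below $11/12$, the conditional probability factor is close to $11/12$, so $1/\mu_s = O(h)$. Integrating $1/\mu_s$ against the density of $t_{(1)}$ over $[t_{\min}, h t_{\min}]$ should then yield a contribution of at most $2h$, the factor of $2$ providing slack to absorb the degradation of the bound as $s$ approaches $h t_{\min}$.

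The main obstacle I anticipate is precisely the boundary behaviour as $s \to h t_{\min}$: the factor $F(h t_{\min}) - F(s)$ vanishes, so the pointwise bound on $1/\mu_s$ diverges. To control this, I would either subdivide the typical regime at intermediate thresholds and apply different lower bounds on each piece, or augment the lower bound on $\mu_s$ by also using the mass on a neighbourhood $[s, (1+\varepsilon) s]$ for an appropriately chosen $\varepsilon$ (which lower-bounds $\mu_s$ uniformly in $s$ independently of $F$'s local behaviour near $h t_{\min}$). Ensuring the divergence remains integrable against the density of $t_{(1)}$, and then packaging the resulting constants cleanly enough to land at exactly $2h + 1.33$, is the delicate part of the argument.
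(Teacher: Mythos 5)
First, a point of comparison: the paper does not prove this theorem. It is quoted from \cite{DBLP:conf/aaai/Zhang18} purely as a benchmark, and the only computation the paper performs around it is the split of $\mathop{\mathbb{E}}_{t\sim F}\left[\frac{t_{\min}}{t}\right]$ at $ht_{\min}$, which shows that the paper's \emph{asymptotic} ratio $\left(\mathop{\mathbb{E}}_{t\sim F}\left[\frac{t_{\min}}{t}\right]\right)^{-1}$ is at most $\frac{12}{11}h<2h+1.33$. Your plan therefore attempts a genuinely harder, finite-$n$ statement that the paper itself never establishes; the instinct to split at $ht_{\min}$ mirrors the paper's short calculation, but the difficulty you flag at the end is not a technicality that more careful bookkeeping will fix --- it is fatal to the chosen starting point.

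The gap: your first step replaces $r_{average}(K)$ by $\mathop{\mathbb{E}}\left[1/\mu_{t_{(1)}}\right]$ via Lemma~\ref{lemma: condi_exp_M}, but this quantity is \emph{not} bounded by $2h+1.33$; it can be made arbitrarily large by distributions satisfying the hypothesis. Take $F$ placing mass $\frac{11}{12}$ uniformly on $[t_{\min},ht_{\min}]$ and mass $\frac{1}{12}$ at (or smeared near) a huge value $M$. For $s$ with $u=F(s)$ close to $\frac{11}{12}$ one has $\mu_s\le\frac{(11/12-u)+\frac{1}{12}\cdot\frac{s}{M}}{1-u}$, and integrating $1/\mu_s$ against the density $n(1-F(s))^{n-1}f(s)$ of $t_{(1)}$ produces a term of order $n(1/12)^{n}\ln M$, which diverges as $M\to\infty$ for every fixed $n$. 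So no subdivision of $[t_{\min},ht_{\min}]$ can rescue the argument, and your proposed uniform lower bound on $\mu_s$ via the mass of $[s,(1+\varepsilon)s]$ fails because $F$ may place no mass there. The reason the theorem is nevertheless true is that the bound $s/\mu_s$ of Lemma~\ref{lemma: condi_exp_M} is extremely loose when $\mu_s$ is small: the exact conditional expectation (\ref{eq: SCM}) equals $\frac{s}{\mu_s}\bigl[1-\frac{(1-\mu_s)(1-(1-\mu_s)^n)}{n\mu_s}\bigr]$, which tends to $\frac{n+1}{2}s$ as $\mu_s\to 0$. A correct proof must therefore work with the pointwise bound $\min\bigl\{s/\mu_s,\ \frac{n+1}{2}s\bigr\}$ (or the exact formula) and split on the size of $\mu_{t_{(1)}}$, not merely on whether $t_{(1)}\le ht_{\min}$; your rare-regime estimate $\frac{n+1}{2}(1/12)^n$ is fine, but the problematic event is $t_{(1)}\le ht_{\min}$ with $\mu_{t_{(1)}}$ small, which your decomposition never isolates.
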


\medskip

We break down the calculation of the expected value to two intervals $t \leq h t_{\min}$ and $t > h t_{\min}$, and use the first interval to bound the expectation as follow. 
\begin{equation*}
\mathop{\mathbb{E}}_{t \sim F}\left[ \frac{t_{\min}}{t} \right] \geq \Pr(t \leq h t_{\min}) \cdot \frac{t_{\min}}{h t_{\min}} + 0 \geq \frac{11}{12} \cdot \frac{1}{h} .
\end{equation*}

So, regardless of the distribution $F$ that costs $t$ follow,
\begin{equation*}
\left(\mathop{\mathbb{E}}_{t \sim F}\left[ \frac{t_{\min}}{t} \right] \right)^{-1} \leq \frac{12}{11}h < 2h + 1.33 .
\end{equation*}



\subsection{Average-case Ratio of Mechanism P} 
Although mechanism P attains a higher social cost than mechanism K on every input, its allocation probabilities are simpler to compute, making it easier to use. Hence, we may wonder whether mechanism P is significantly worse than mechanism K? Surprisingly, we will see that the average-case approximation ratio of mechanism P converges to the same value as mechanism K, which means that both mechanisms perform comparably well for sufficiently large, i.i.d inputs. 
Due to space limitations, the proofs are deferred to the Appendix. 
\begin{theorem}\label{Pratio}
As $n$ approaches infinity, the average-case approximation ratio of mechanism P converges to the same constant as mechanism K. That is, 
\begin{equation*}
r_{average}(P) \sim \frac{\sum_{j\in[m]} (\mathop{\mathbb{E}}_{t \sim F^j } \left[ \frac{1}{t} \right] )^{-1}}{\sum_{j\in[m]} t_{j}^{\min} } .
\end{equation*}
\end{theorem}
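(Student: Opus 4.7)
The plan is to mirror the template established for mechanism K (Lemma \ref{lemma: condi_exp_M} and Theorem \ref{main_theo: M}), exploiting the fact that mechanism P admits a very clean closed form for its social cost. First I would observe that for each task $j$,
\begin{equation*}
SC_P(\mathbf{t}^j) \;=\; \sum_{i=1}^n p_i^P \, t_i^j \;=\; \sum_{i=1}^n \frac{t_i^j \cdot (t_i^j)^{-1}}{\sum_{k=1}^n (t_k^j)^{-1}} \;=\; \frac{n}{\sum_{k=1}^n (t_k^j)^{-1}},
\end{equation*}
so the analysis of $SC_P$ reduces entirely to understanding the sample mean $\frac{1}{n}\sum_k 1/t_k^j$, which is far more tractable than the integral representation used for mechanism K.

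Next I would condition on $t_{(1)}^j \equiv s > t^j_{\min}$, whereupon the remaining $n-1$ costs are i.i.d.\ draws from the conditional distribution $G_s^j$ defined in \eqref{def: condiCDF}. Writing $\sum_{k=1}^n 1/t_k^j = 1/s + \sum_{k=2}^n 1/t_k^j$ and noting that $1/t$ is bounded on $[s,\infty)$, the strong law of large numbers yields
\begin{equation*}
\frac{1}{n-1}\sum_{k=2}^n \frac{1}{t_k^j} \;\longrightarrow\; \mathop{\mathbb{E}}_{t \sim G_s^j}\!\left[\frac{1}{t}\right] \;=\; \frac{\mu_s^j}{s}
\end{equation*}
almost surely, using the notation $\mu_s^j$ from \eqref{def: mu}. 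Consequently $\mathbb{E}[SC_P(\mathbf t^j)\mid t_{(1)}^j = s]$ tends to $s/\mu_s^j$ as $n\to\infty$, which is exactly the leading-order expression produced for mechanism K in Lemma \ref{lemma: condi_exp_M}. Integrating over $t_{(1)}^j$ (which concentrates at $t^j_{\min}$) and using the fact that the conditional CDF $G_s^j$ converges to $F^j$ as $s \to t^j_{\min}$, one obtains $\mathbb{E}[SC_P(\mathbf t^j)] \to (\mathop{\mathbb{E}}_{t\sim F^j}[1/t])^{-1}$ while $\mathbb{E}[SC_{\mathrm{OPT}}(\mathbf t^j)] = \mathbb{E}[t_{(1)}^j] \to t^j_{\min}$. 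Summing these limits across tasks $j \in [m]$ and reusing the limit-exchange step from the appendix proof of Theorem \ref{main_theo: M} delivers the stated convergence.

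The main obstacle, as for mechanism K, is transferring convergence of the numerator and denominator separately into convergence of the expectation of their \emph{ratio}. The random ratio $\sum_j SC_P(\mathbf t^j)/\sum_j SC_{\mathrm{OPT}}(\mathbf t^j)$ is only deterministically bounded by $n$ (this being the worst-case guarantee from \cite{DBLP:journals/mst/Koutsoupias14}), so naive dominated convergence does not apply as $n \to \infty$. The resolution, which follows the same template as the appendix argument for Theorem \ref{main_theo: M}, is a concentration-plus-truncation split: on the high-probability event where $\frac{1}{n}\sum_k 1/t_k^j$ is close to $\mathop{\mathbb{E}}_{F^j}[1/t]$ and each $t_{(1)}^j$ is close to $t^j_{\min}$ (both uniformly across tasks), the ratio is controlled by a constant arbitrarily close to the target; on the low-probability complement, the crude bound $n$ is multiplied by a tail probability that, thanks to the boundedness of $1/t_k^j \in (0, 1/t^j_{\min}]$, decays (e.g.\ via Hoeffding or Chebyshev) fast enough to dominate the factor of $n$ and vanish in the limit.
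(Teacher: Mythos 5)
Your proposal is correct and follows essentially the same route as the paper: the closed form $SC_P(\mathbf t^j)=n/\sum_k (t_k^j)^{-1}$, the conditional law of large numbers giving $SC_P(\mathbf t^j)\to s/\mu^j_s$ given $t^j_{(1)}=s$ (the paper's Lemma~\ref{lemma: condi_exp_P}), and a $\delta$-truncation on the event $\bigcap_j\{t^j_{(1)}\le t^j_{\min}+\delta\}$ with the worst-case ratio controlling the exponentially small complement, exactly as in Lemmas~\ref{lemma: nulti_P_upp} and~\ref{lemma: nulti_P_low}. If anything, you are more explicit than the paper about justifying the passage from almost-sure convergence of the numerator and denominator to convergence of the expected ratio, which is a point the paper treats somewhat informally.
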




\section{Conclusion and Future Work}
This paper focused on the two known truthful mechanisms proposed in \cite{DBLP:journals/mst/Koutsoupias14}, for the problem of scheduling unrelated machines when payment is not feasible. Although mechanism K is optimal for social cost minimization, its approximation ratio is $\frac{n+1}{2}$ which means that there is a large gap between its output and the optimal solution. Following the recent trend of beyond worst-case analysis, we provided an in-depth understanding of the performance of these known mechanisms by considering other models of input. On the one hand, we showed that mechanism K is better than mechanism P, on every single input; on the other, we proved that the average-case approximation ratio of mechanism K converges to a constant when machines' execution costs are independent and identically drawn from a task-specific distribution. This bound improved the constant bound in \cite{DBLP:conf/aaai/Zhang18}, and we provided a better understanding of this distribution dependent constant by plugging in a few standard distributions. Surprisingly, the average-case approximation ratio of mechanism P converges to the same constant, which means that both mechanisms perform comparably well for sufficiently large, i.i.d inputs. 

We hope that the research beyond worst-case analysis in mechanism design will flourish in the following years. To this end, we list a number of open questions along with this work. First, what is the lower bound of the average-case approximation ratio for this problem? Second, is it possible to design new mechanisms with a better average-case approximation ratio? As we have seen in this paper and average-case analysis in algorithm design, the answer to these two questions is likely to depend on the input distribution as it is rare to see a universal analysis for all distributions. Third, can we derive an average-case approximation ratio bound when we drop the identical distribution assumption? That is, the machines' costs are drawn from an independent but not necessarily identical distribution. More generally, what if the costs are correlated? Fourth, when we shift from social cost minimization to makespan minimization, the same questions remain and are probably more challenging. We leave these challenging questions for future work.

\noindent

\newpage

\bibliographystyle{named} 
\bibliography{refs}

\newpage

\section*{Appendix}

{\bf A: Proof of Theorem \ref{main_theo: M}:}

Theorem \ref{main_theo: M} is implied by the following two lemmas.


\begin{lemma}
\label{lemma: nulti_M_upp}
The average-case approximation ratio of mechanism K is upper bound as follows.
\begin{equation*}
r_{average}(K) \leq \frac{\sum_{j\in[m]} (\mathop{\mathbb{E}}_{t \sim F^j } \left[ \frac{1}{t} \right] )^{-1}}{\sum_{j\in[m]} t^{j}_{\min} }. 
\end{equation*}
\end{lemma}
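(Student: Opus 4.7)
The plan is to decouple the numerator and denominator of the average-case ratio using a deterministic lower bound on the optimum, then bound the expected cost of mechanism K on each task via Lemma~\ref{lemma: condi_exp_M}, and finally analyze the asymptotic behavior as $n \to \infty$. First, for every realization, $SC_{OPT}({\bf t}^j) = t^j_{(1)} \geq t^j_{\min}$ deterministically, since the optimum assigns task $j$ to the cheapest machine. Substituting this pointwise lower bound into the denominator and pulling the expectation inside the sum yields
\begin{equation*}
r_{average}(K) \leq \frac{\sum_{j\in[m]} \mathop{\mathbb{E}}[SC_{K}({\bf t}^j)]}{\sum_{j\in[m]} t^j_{\min}}.
\end{equation*}

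Next, I would condition on $t^j_{(1)} = s$ and apply Lemma~\ref{lemma: condi_exp_M} to get $\mathop{\mathbb{E}}[SC_{K}({\bf t}^j) \mid t^j_{(1)} = s] < s/\mu^j_s$, so the remaining task is to estimate $\mathop{\mathbb{E}}[\phi^j(t^j_{(1)})]$ where $\phi^j(s) := s/\mu^j_s$. Rewriting via (\ref{def: condiCDF}) and (\ref{def: mu}),
\begin{equation*}
\phi^j(s) = \frac{1 - F^j(s)}{\int_s^{\infty} (1/t)\,dF^j(t)},
\end{equation*}
which is continuous in $s$ on $[t^j_{\min}, \infty)$, and in particular $\phi^j(t^j_{\min}) = (\mathop{\mathbb{E}}_{t \sim F^j}[1/t])^{-1}$, which is exactly the per-task contribution to the target constant.

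The asymptotic step is then to show $\mathop{\mathbb{E}}[\phi^j(t^j_{(1)})] \to \phi^j(t^j_{\min})$ as $n \to \infty$. Since $t^j_{(1)}$ is the minimum of $n$ i.i.d.\ draws from $F^j$, it converges to $t^j_{\min}$ almost surely, with tail $\Pr(t^j_{(1)} > s) = (1 - F^j(s))^n$ decaying geometrically in $n$ for any fixed $s > t^j_{\min}$. Combined with the continuity of $\phi^j$ at $t^j_{\min}$, this should give the desired convergence and hence the lemma.

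The main obstacle is justifying the passage to the limit, because $\phi^j(s)$ is not uniformly bounded in $s$ — in fact $\phi^j(s) \sim s$ as $s \to \infty$, so a naive dominated convergence argument fails. I would address this by splitting the integration over $s = t^j_{(1)}$ into a small window $[t^j_{\min}, t^j_{\min} + \epsilon]$, where continuity of $\phi^j$ keeps the integrand within $\phi^j(t^j_{\min}) + o_{\epsilon}(1)$, and its complement, where the tail probability $(1 - F^j(t^j_{\min} + \epsilon))^n$ decays exponentially in $n$ and dominates the (at most linear) growth of $\phi^j$ under a mild moment assumption on $F^j$. Sending $n \to \infty$ first and then $\epsilon \to 0$ then yields the claimed upper bound.
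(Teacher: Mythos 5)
Your overall strategy---condition on $t^j_{(1)}$, invoke Lemma~\ref{lemma: condi_exp_M} to replace the conditional expectation of $SC_K$ by $s/\mu^j_s$, observe that this quantity equals $(\mathop{\mathbb{E}}_{t\sim F^j}[1/t])^{-1}$ at $s=t^j_{\min}$, and then squeeze using a small window $[t^j_{\min},t^j_{\min}+\delta]$ whose complement has probability $(1-F^j(t^j_{\min}+\delta))^n$---is exactly the paper's strategy, and the good-event part of your argument matches the paper's. The gap is in how you handle the tail event. By first replacing the denominator $\sum_j SC_{OPT}({\bf t}^j)=\sum_j t^j_{(1)}$ with the constant $\sum_j t^j_{\min}$, you commit yourself to showing $\mathop{\mathbb{E}}\bigl[\phi^j(t^j_{(1)})\,\mathbf{1}\{t^j_{(1)}>t^j_{\min}+\epsilon\}\bigr]\to 0$, where $\phi^j(s)=s/\mu^j_s$. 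Your claim that $\phi^j(s)$ grows at most linearly is not true in general: $\mu^j_s=\mathop{\mathbb{E}}_{t\sim G^j_s}[s/t]$ can tend to $0$ as $s\to\infty$ (e.g., when the conditional mass beyond $s$ concentrates far above $s$), so $\phi^j$ can be superlinear, and even controlling a linear integrand against the tail of $t^j_{(1)}$ requires an integrability condition on $F^j$ that is not among the lemma's hypotheses. You recognize this and propose a ``mild moment assumption,'' but the lemma is stated for arbitrary $F^j$ on $[t^j_{\min},+\infty)$, so this is a genuine gap rather than a technicality.

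The paper avoids the issue with one extra idea: on the bad event it does \emph{not} decouple numerator and denominator, but instead bounds the entire ratio $\sum_j SC_K({\bf t}^j)/\sum_j SC_{OPT}({\bf t}^j)$ pointwise by the known worst-case approximation ratio $\frac{n+1}{2}$. The bad-event contribution is then at most $\frac{n+1}{2}\sum_j(1-F^j(t^j_{\min}+\delta))^n$, which vanishes as $n\to\infty$ regardless of any moment condition, since the polynomial factor $\frac{n+1}{2}$ is crushed by the geometric decay. If you keep the denominator as $\sum_j t^j_{(1)}$ until after the event split and insert the worst-case bound on the tail, your argument closes without any additional assumption.
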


\begin{proof}
According to the average-case approximation ratio (\ref{def: r_M}), we have that
\begin{align*}
& r_{average}(K) = \mathop{\mathbb{E}}_{t_i^j \sim F^j}   \left[ \frac{  \sum_{j \in [m]} SC_{K}({\bf t}^j) }{ \sum_{j \in [m]} SC_{OPT}({\bf t}^j) } \right] = \mathop{\mathbb{E}}_{t_{(1)}^j} \left[ \mathop{\mathbb{E}}_{t_{(i)}^j, i=2,\cdots,n}  \bigg[ \left.\frac{  \sum_{j \in [m]} SC_{K}({\bf t}^j) }{ \sum_{j \in [m]} SC_{OPT}({\bf t}^j) }  \right| t_{(1)}^j  \bigg] \right] .
\end{align*}
Since $\sum_{j \in [m]} SC_{OPT}({\bf t}^j) = \sum_{j \in [m]} t^j_{(1)}$ and it is independent of $t_{(i)}^j, i=2,\cdots,n$, together with Lemma~\ref{lemma: condi_exp_M}, we can upper bound the ratio by
\begin{equation*}
r_{average}(K)  \leq \mathop{\mathbb{E}}_{t_{(1)}^j, j \in [m]} \left[   \frac{ 1 }{ \sum_{j \in [m]} t^j_{(1)} }  \cdot  \sum_{j \in [m]} \frac{t^j_{(1)}}{\mu^j_{t^j_{(1)}}}   \right]  .
\end{equation*}
For any $\delta > 0$, let $\{ t^j_{(1)} > t^j_{\min} + \delta \}$ denotes the set of events that $t^j_{(1)} > t^j_{\min} + \delta$. On this set, we can bound the expected value by the worst-case ratio $\frac{n+1}{2}$. Similarly, we define the set $\{ t^j_{(1)} \leq t^j_{\min} + \delta \}$. Then, we break down the calculation of the expectation via

\begin{equation*}
\begin{split}
r_{average}(K) &\leq \Pr \Big( \bigcup_{j \in[m]} \{ t^j_{(1)} > t^j_{\min} + \delta \} \Big) \cdot \frac{n+1}{2} + \Pr \Big( \bigcap_{j \in[m]} \{ t^j_{(1)} \leq t^j_{\min} + \delta \} \Big) \cdot \frac{\sum\limits_{j \in [m]} \max\limits_{ t^j_{\min} \leq s \leq t^j_{\min} + \delta } \Big\{ 
\frac{ s }{\mu^j_{s}}   \Big\}}{\sum_{j \in [m]} t^j_{\min} } \\
& \leq \Pr \Big( \bigcup_{j \in[m]} \{ t^j_{(1)} > t^j_{\min} + \delta \} \Big) \cdot \frac{n+1}{2}  + \frac{\sum_{j \in [m]} \max_{ t^j_{\min} \leq s \leq t^j_{\min} + \delta } \Big\{ 
\frac{ s }{\mu^j_{s}}   \Big\}}{\sum_{j \in [m]} t^j_{\min} } .
\end{split}
\end{equation*}

Since $t^j_{(1)}$ is the first order statistics, and by the independence of $t^j_{(1)}, \forall j$, we have that 
\begin{equation}
\begin{split}
\Pr \Big( \bigcup_{j \in[m]} \{ t^j_{(1)} > t^j_{\min} + \delta \} \Big)
&\leq  \sum_{j \in [m]} \Pr (  t^j_{(1)} > t^j_{\min} + \delta )  \\
& =  \sum_{j \in [m]} \left(1 - F^j(t^j_{\min} + \delta)\right)^n .
\label{eq: M_lowerbound}
\end{split}
\end{equation}

Since the number of tasks $m$ is finite, $\Pr \Big( \bigcup_{j \in[m]} \{ t^j_{(1)} > t^j_{\min} + \delta \} \Big) \cdot \frac{n+1}{2} $ converges to $0$, as $n$ goes to infinity. Therefore, when $n$ approaches infinity, we obtain that
\begin{align}
& r_{average}(K) \leq \frac{\sum_{j \in [m]} \max_{ t^j_{\min} \leq s \leq t^j_{\min} + \delta } \big\{ 
\frac{ s }{\mu^j_{s}}   \big\}}{\sum_{j \in [m]} t^j_{\min} }.
\end{align}
Last, we complete the proof by letting $\delta$ approach 0.
\end{proof}

\begin{lemma}
\label{lemma: nulti_M_low}
The average-case approximation ratio of mechanism K is lower bounded as follows.
\begin{equation*}
r_{average}(K) \geq \frac{\sum_{j \in[m]} (\mathop{\mathbb{E}}_{t \sim F^j } \left[ \frac{1}{t} \right] )^{-1}}{\sum_{j \in[m]} t^{j}_{\min} }. 
\end{equation*}
\end{lemma}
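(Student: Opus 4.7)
The plan is to mirror the proof of Lemma~\ref{lemma: nulti_M_upp}, invoking the \emph{lower} bound half of Lemma~\ref{lemma: condi_exp_M} in place of the upper bound. Conditioning on $(t^j_{(1)})_{j\in[m]}$ and using $\sum_{j}SC_{OPT}({\bf t}^j)=\sum_j t^j_{(1)}$, together with the independence of the inner conditional expectations across tasks, one obtains
\begin{equation*}
r_{\text{average}}(K) \;\geq\; \mathop{\mathbb{E}}_{t^j_{(1)},\,j\in[m]}\!\left[\frac{1}{\sum_{j\in[m]} t^j_{(1)}}\sum_{j\in[m]}\frac{t^j_{(1)}}{\mu^j_{t^j_{(1)}}}\left(1-\frac{1}{n\,\mu^j_{t^j_{(1)}}}\right)\right].
\end{equation*}

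Next, for any $\delta>0$, I would restrict the expectation to the event $E_\delta=\bigcap_{j\in[m]}\{t^j_{(1)}\leq t^j_{\min}+\delta\}$. The integrand is non-negative (for $n$ large enough that $n\mu^j_{t^j_{(1)}}>1$), so discarding its contribution on $E_\delta^c$ only weakens the bound; by the same union-bound/geometric argument used in (\ref{eq: M_lowerbound}) we have $\Pr(E_\delta^c)\leq\sum_{j\in[m]}\bigl(1-F^j(t^j_{\min}+\delta)\bigr)^n\to 0$, so $\Pr(E_\delta)\to 1$. On $E_\delta$, I would bound the denominator above by $\sum_{j\in[m]}(t^j_{\min}+\delta)$ and each summand in the numerator below by its infimum over $s\in[t^j_{\min},t^j_{\min}+\delta]$, yielding the deterministic lower bound
\begin{equation*}
r_{\text{average}}(K) \;\geq\; \Pr(E_\delta)\cdot\frac{\sum_{j\in[m]}\inf_{t^j_{\min}\leq s\leq t^j_{\min}+\delta}\frac{s}{\mu^j_s}\left(1-\frac{1}{n\mu^j_s}\right)}{\sum_{j\in[m]}(t^j_{\min}+\delta)}.
\end{equation*}
Sending $n\to\infty$ first and then $\delta\to 0$ should recover the claimed expression.

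The main technical point is to justify that last limit, which reduces to showing that $s\mapsto \mu^j_s$ is right-continuous at $s=t^j_{\min}$ with a strictly positive limit equal to $\mathop{\mathbb{E}}_{t\sim F^j}[t^j_{\min}/t]$. Using $G^j_s(t)=\frac{F^j(t)-F^j(s)}{1-F^j(s)}$ and $\mu^j_s=\int_s^{+\infty}(s/t)\,dG^j_s(t)$, letting $s\downarrow t^j_{\min}$ makes $F^j(s)\to 0$ and $G^j_s(t)\to F^j(t)$ pointwise, so dominated convergence (with dominator $1$, since $s/t\leq 1$) delivers $\mu^j_s\to \mathop{\mathbb{E}}_{t\sim F^j}[t^j_{\min}/t]>0$, the positivity following from $F^j$ being supported in $[t^j_{\min},+\infty)$. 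Consequently $s/\mu^j_s\to (\mathop{\mathbb{E}}_{t\sim F^j}[1/t])^{-1}$, and since $\mu^j_s$ is bounded away from zero on a small right-neighborhood of $t^j_{\min}$, the correction $1/(n\mu^j_s)$ vanishes uniformly in $s\in[t^j_{\min},t^j_{\min}+\delta]$ as $n\to\infty$. I expect this continuity/interchange step to be the only subtle part of the argument; everything else parallels the upper-bound proof of Lemma~\ref{lemma: nulti_M_upp} essentially symbol for symbol, and combined with Lemma~\ref{lemma: nulti_M_upp} it pins down the exact asymptotic in Theorem~\ref{main_theo: M}.
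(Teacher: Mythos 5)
Your proposal is correct and follows essentially the same route as the paper's own proof: apply the lower half of Lemma~\ref{lemma: condi_exp_M} after conditioning on $(t^j_{(1)})_j$, restrict to the event $\bigcap_j\{t^j_{(1)}\le t^j_{\min}+\delta\}$ whose probability tends to $1$, take $n\to\infty$ and then $\delta\to 0$. Your explicit dominated-convergence argument for the right-continuity of $s\mapsto\mu^j_s$ at $t^j_{\min}$ actually supplies a detail the paper leaves implicit in its final ``let $\delta$ approach $0$'' step.
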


\begin{proof}
For any $\delta > 0$,
\begin{equation*}
\begin{split}
 r_{average}(K) = \mathop{\mathbb{E}}_{t_i^j \sim F^j}   \left[ \frac{  \sum_{j \in [m]} SC_{K}({\bf t}^j) }{ \sum_{j \in [m]} SC_{OPT}({\bf t}^j) } \right] &= \mathop{\mathbb{E}}_{t_{(1)}^j} \Bigg[ \mathop{\mathbb{E}}_{t^j_{(i)}, i=2,\cdots,n}  \bigg[ \left.\frac{  \sum_{j \in [m]} SC_{K}({\bf t}^j) }{ \sum_{j \in [m]} SC_{OPT}({\bf t}^j) }  \right| t_{(1)}^j  \bigg] \Bigg] \\
& \geq \mathop{\mathbb{E}}_{t_{(1)}^j} \Bigg[   \frac{ 1 }{ \sum_{j \in [m]} t^j_{(1)} }  \cdot  \sum_{j \in [m]} \frac{t^j_{(1)}}{\mu^j_{t^j_{(1)}}}  \bigg(1 - \frac{1}{n \cdot \mu^j_{t^j_{(1)}}} \bigg)  \Bigg]  \\
& \geq \Pr \Big( \bigcap_{j \in[m]} \{ t^j_{(1)} \leq t^j_{\min} + \delta \} \Big) \cdot \frac{\sum\limits_{j \in [m]} \min\limits_{ t^j_{\min} \leq s \leq t^j_{\min} + \delta } \bigg\{ 
\frac{ s }{\mu^j_{s}}  \Big(1 - \frac{1}{n \cdot \mu^j_{s}} \Big)  \bigg\}}{  m \delta + \sum_{j \in [m]} t^j_{\min}  } ,
\end{split}
\end{equation*}
where the first inequality is due to Lemma~\ref{lemma: condi_exp_M}.

For $t^j_{\min} \leq s \leq t^j_{\min} + \delta $, since $\mu^j_s$ is bounded and $\min_{ t^j_{\min} \leq s \leq t^j_{\min} + \delta } \{ 
\mu^j_s \} > 0 $, we have that $\frac{1}{n  \cdot \mu^j_{s}} $ converges to $0$ for all $j$, when $n$ approaches infinity.

Furthermore, since $t^j_{(1)}$ is the first order statistics, we have that 
\begin{align}\label{eq: orderstatistics}
\Pr \Big( \bigcap_{j \in[m]} \{ t^j_{(1)} \leq t^j_{\min} + \delta \} \Big) & = 1 - \Pr \Big( \bigcup_{j \in[m]} \{ t^j_{(1)} > t^j_{\min} + \delta \} \Big) \nonumber  \\
&\geq 1 - \sum_{j \in [m]} \Pr (  t^j_{(1)} > t^j_{\min} + \delta )  \nonumber  \\
& = 1 - \sum_{j \in [m]} \left(1 - F^j(t^j_{\min} + \delta)\right)^n .
\end{align}

Since the number of tasks $m$ is finite, $\Pr \Big( \bigcap_{j \in[m]} \{ t^j_{(1)} \leq t_{\min} + \delta \} \Big)$ converges to $1$, as $n$ goes to infinity. Therefore, when $n$ approaches infinity, we obtain that
\begin{align*}
& r_{average}(K) \geq \frac{\sum_{j \in [m]} \min_{ t^j_{\min} \leq s \leq t^j_{\min} + \delta } \bigg\{ 
\frac{ s }{\mu^j_{s}}    \bigg\}}{  m \delta + \sum_{j \in [m]} t^j_{\min}  }.
\end{align*}
Last, we complete the proof by letting $\delta$ approaches 0.
\end{proof}

Combining Lemma \ref{lemma: nulti_M_upp} and Lemma \ref{lemma: nulti_M_low}, we obtain Theorem \ref{main_theo: M} as a corollary.

\bigskip

{\bf B: Proof of Theorem \ref{Pratio}:}

First, we show the convergence of the expected social cost of mechanism P, for any task $j$. 

\begin{lemma}\label{lemma: condi_exp_P}
For any task $j$, given that $t^j_{(1)} \equiv s > t^j_{\min}$ is a constant, the social cost of mechanism P, $SC_{P}({\bf t}^j)$, converges to $\frac{s}{\mu^j_{s}}$ almost surely.
\end{lemma}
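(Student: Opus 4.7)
The plan is to reduce the social cost of mechanism P to a simple function of the empirical average of $1/t^j_i$, and then invoke the strong law of large numbers on the conditional i.i.d.\ sample. First, I would substitute the closed form of the allocation probabilities $p^P_{(k)} = t_{(k)}^{-1} / \sum_i t_{(i)}^{-1}$ into $SC_P({\bf t}^j) = \sum_k p^P_{(k)} t^j_{(k)}$. The numerator and denominator cancel nicely: the $t_{(k)}^{-1} \cdot t_{(k)}$ factors collapse to a constant $1$, giving
\begin{equation*}
SC_P({\bf t}^j) \;=\; \frac{n}{\sum_{i=1}^n \, 1/t^j_i}.
\end{equation*}

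Next, condition on $t^j_{(1)} \equiv s$. By the construction of $G^j_s$ in (\ref{def: condiCDF}), the remaining $n-1$ costs $t^j_2,\dots,t^j_n$ are i.i.d.\ draws from $G^j_s$ on $[s,+\infty)$. In particular, $1/t^j_i \le 1/s$ is a bounded random variable, so its mean $\mathop{\mathbb{E}}_{t \sim G^j_s}[1/t]$ is finite, and by the definition of $\mu^j_s$ in (\ref{def: mu}) we have $\mathop{\mathbb{E}}_{t \sim G^j_s}[1/t] = \mu^j_s / s$. The strong law of large numbers then yields
\begin{equation*}
\frac{1}{n-1}\sum_{i=2}^n \frac{1}{t^j_i} \;\longrightarrow\; \frac{\mu^j_s}{s} \quad \text{almost surely.}
\end{equation*}

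Adding the deterministic contribution $1/s$ from $t^j_{(1)} = s$ and rescaling by $n/(n-1) \to 1$ gives $\tfrac{1}{n}\sum_{i=1}^n 1/t^j_i \to \mu^j_s/s$ almost surely. Substituting back into the displayed formula for $SC_P$ and using the continuous mapping theorem (the map $x \mapsto 1/x$ is continuous on the positive reals, and the limit $\mu^j_s / s$ is strictly positive because $\mu^j_s > 0$), we conclude that $SC_P({\bf t}^j) = n / \sum_i 1/t^j_i$ converges almost surely to $s/\mu^j_s$.

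The argument is essentially routine; the only delicate point is ensuring positivity of the denominator limit so that the continuous mapping step is valid, which follows from the support assumption $t^j_i \ge s > 0$. This is the main technical step to state cleanly, but there is no genuine obstacle beyond it.
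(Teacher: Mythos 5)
Your proposal is correct and follows essentially the same route as the paper: both reduce $SC_P({\bf t}^j)$ to the reciprocal of the empirical mean of $1/t^j_i$ (equivalently $s/t^j_i$), condition on $t^j_{(1)}=s$ so that the remaining costs are i.i.d.\ from $G^j_s$, and apply the strong law of large numbers. Your extra care about boundedness of $1/t^j_i$ and positivity of the limiting denominator is a welcome tightening of the continuous-mapping step that the paper leaves implicit, but it does not change the argument.
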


\begin{proof}
The social cost attainable by mechanism P can be rewritten as follows.
\begin{align*}
SC_{P}( {\bf t}^j) &= \sum_{k =1}^{n} p_{(k)} t_{(k)}^j \nonumber \\ 
& = \sum_{k =1}^{n} \frac{ 1/ t_{(k)}^j }{ \sum_{i =1}^{n} 1/ t^j_{(i)} } \cdot t_{(k)}^j \nonumber \\ 
&  = \frac{t^j_{(1)}}{\frac{1}{n} +  \frac{1}{n} \sum_{i =2}^{n} \frac{t^j_{(1)}}{ t_{(i)}^j} }  \nonumber \\
&  = \frac{s}{\frac{1}{n} +  \frac{1}{n} \sum_{i =2}^{n} \frac{s}{ t_i^j} } .
\end{align*}

Note that $t^j_2,...,t^j_n$ are independent and identically distributed variables and follow the conditional CDF in (\ref{def: condiCDF}). As $n$ approaches infinity, by the law of large numbers, we have that
\begin{align*}
SC_{P}( {\bf t}^j) &= \frac{s}{\frac{1}{n} +  \frac{1}{n} \sum_{i =2}^{n} \frac{s}{ t_i^j} }  \\
&\longrightarrow  \frac{s}{ 0 + \mathbb{E}_{t \sim G^j_{s} } [\frac{s}{t}] } = \frac{s}{\mu^j_{s}}, \ \text{almost surely.}
\end{align*}
\end{proof}

Second, we upper bound the average-case ratio of mechanism P when it is running on multiple tasks. 
\begin{lemma}
\label{lemma: nulti_P_upp}
The average-case approximation ratio of mechanism P is upper bound as follows.
\begin{equation*}
r_{average}(P) \leq \frac{\sum_{j\in[m]} (\mathop{\mathbb{E}}_{t \sim F^j } \left[ \frac{1}{t} \right] )^{-1}}{\sum_{j\in[m]} t^{j}_{\min} }. 
\end{equation*}
\end{lemma}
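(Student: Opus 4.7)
The plan is to mirror the proof of Lemma~\ref{lemma: nulti_M_upp}, substituting the conditional convergence statement for mechanism P (Lemma~\ref{lemma: condi_exp_P}) in place of the conditional two-sided bound for mechanism K (Lemma~\ref{lemma: condi_exp_M}). First, I would condition on the minimum order statistics $t^j_{(1)}$ and use the tower property to write
\[
r_{average}(P) = \mathop{\mathbb{E}}_{t^j_{(1)}} \left[ \frac{1}{\sum_{j \in [m]} t^j_{(1)}} \cdot \sum_{j \in [m]} \mathop{\mathbb{E}} \bigl[ SC_{P}({\bf t}^j) \,\big|\, t^j_{(1)} \bigr] \right],
\]
since $SC_{OPT}({\bf t}^j) = t^j_{(1)}$ is $\sigma(t^j_{(1)})$-measurable. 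This reduces the problem to bounding the inner conditional expectation $\mathop{\mathbb{E}}[SC_{P}({\bf t}^j) \mid t^j_{(1)} = s]$ uniformly for $s$ close to $t^j_{\min}$.

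Second, for any $\delta > 0$, I would partition the outer expectation into the ``bad'' event $\bigcup_{j \in [m]} \{t^j_{(1)} > t^j_{\min} + \delta\}$ and its complement, exactly as in the K proof. On the bad event, the ratio is at most the worst-case guarantee $n$ for mechanism P, and by a union bound and independence across tasks the probability of this event is at most $\sum_{j \in [m]} (1 - F^j(t^j_{\min} + \delta))^n$, which decays exponentially in $n$. Since $n$ grows only linearly, the bad-event contribution vanishes in the limit, exactly as in inequality (\ref{eq: M_lowerbound}).

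On the complementary good event I would use Lemma~\ref{lemma: condi_exp_P} to assert that the conditional expectation $\mathop{\mathbb{E}}[SC_{P}({\bf t}^j) \mid t^j_{(1)} = s]$ tends to $s/\mu^j_s$ uniformly for $s \in [t^j_{\min}, t^j_{\min} + \delta]$. This is the main obstacle: Lemma~\ref{lemma: condi_exp_P} gives only almost-sure convergence, so I would need an additional uniform integrability step. Concretely, writing $SC_{P}({\bf t}^j) = n / \bigl(1/s + \sum_{i=2}^n 1/t^j_i\bigr)$ and noting that after conditioning on $t^j_{(1)} = s$ each $1/t^j_i \in (0, 1/s]$ is bounded, Hoeffding's inequality yields exponential concentration of $\frac{1}{n-1}\sum_{i=2}^n 1/t^j_i$ around its mean $\mu^j_s/s$. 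On the concentration event the ratio $SC_P({\bf t}^j)/s$ lies within a small neighborhood of $1/\mu^j_s$; on its complement, the crude bound $SC_P({\bf t}^j) \le n s$ combined with the exponentially small probability of failure is asymptotically negligible.

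Finally, letting the Hoeffding tolerance and then $\delta$ tend to zero, and summing the contributions over the finitely many tasks, the argument reproduces the same limiting expression obtained for mechanism K, namely
\[
r_{average}(P) \leq \frac{\sum_{j\in[m]} \bigl(\mathop{\mathbb{E}}_{t \sim F^j } \left[ 1/t \right] \bigr)^{-1}}{\sum_{j\in[m]} t^{j}_{\min}}.
\]
I expect the only non-routine ingredient to be the passage from almost-sure convergence in Lemma~\ref{lemma: condi_exp_P} to convergence in conditional expectation; once this uniform-integrability upgrade is in hand, the rest of the proof is a direct transplantation of the K-side analysis.
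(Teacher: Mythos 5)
Your proof is correct, but it takes a genuinely different (and heavier) route than the paper's. The paper does not condition on $t^j_{(1)}$ at all for this upper bound: it observes that $SC_{P}({\bf t}^j)=\bigl(\tfrac{1}{n}\sum_{k=1}^n 1/t_k^j\bigr)^{-1}$ is exactly the harmonic mean of all $n$ costs, applies the \emph{unconditional} law of large numbers to get $SC_{P}({\bf t}^j)\rightarrow(\mathop{\mathbb{E}}_{t\sim F^j}[1/t])^{-1}$ almost surely, and then replaces the denominator by the deterministic bound $SC_{OPT}({\bf t}^j)\ge t^j_{\min}$. This decouples numerator from denominator, so no good/bad event split, no conditional distribution $G^j_s$, no $\mu^j_s$, and no $\delta\to 0$ limit are needed; the whole proof is a few lines. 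Your transplantation of the K-side machinery (conditioning on the minimum, union-bounding the bad event, invoking Lemma~\ref{lemma: condi_exp_P}, and upgrading its almost-sure convergence to convergence of conditional expectations via Hoeffding plus the crude bound $SC_P\le n\,t^j_{(1)}$) also works, and you correctly identified the one non-routine step — the passage from almost-sure convergence to convergence in expectation. In fact your version is arguably more careful on exactly that point: the paper's proof silently interchanges the limit with $\mathop{\mathbb{E}}[\,\cdot\,]$ when it writes $\mathop{\mathbb{E}}[\sum_j SC_P({\bf t}^j)]\rightarrow\sum_j(\mathop{\mathbb{E}}_{F^j}[1/t])^{-1}$, whereas your concentration argument supplies the missing uniform-integrability justification. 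The trade-off is length: the paper's observation that the OPT denominator can be bounded by a constant is what buys the short proof, at the cost of leaving the interchange implicit.
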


\begin{proof}
As we have shown above, $SC_{P}( {\bf t}^j) = \frac{1}{ \frac{1}{n} \sum_{k =1}^{n} \frac{1}{ t_{k}^j} }$.
By the law of large numbers, as $n$ approaches infinity, we have
\begin{align*}
SC_{P}( {\bf t}^j) &= \ \frac{1}{ \frac{1}{n} \sum_{k =1}^{n} \frac{1}{ t_{k}^j} }  
\longrightarrow  \frac{1}{ \mathop{\mathbb{E}}_{t \sim F^j}\Big[\frac{1}{t}\Big] }, \ \text{almost surely.}
\end{align*}

Since $ SC_{OPT}({\bf t}^j)  \geq t^{j}_{\min} $, we obtain that

\begin{equation}
\begin{split}
 \,\,\,\,\,\ r_{average}(P) &\leq \frac{\mathop{\mathbb{E}} \left[  \sum_{j\in [m]} SC_{P}({\bf t}^j)  \right]}{\sum_{j\in [m]} t^{j}_{\min}} \\
& \longrightarrow \frac{\sum_{j\in [m]} (\mathop{\mathbb{E}}_{t \sim F^j } \left[ \frac{1}{t} \right] )^{-1}}{\sum_{j \in [m]} t^{j}_{\min} }.
\label{eq: P_lowerbound}
\end{split}
\end{equation}
\end{proof}

Next, we lower bound the average-case ratio of mechanism P.
\begin{lemma}
\label{lemma: nulti_P_low}
The average-case approximation ratio of mechanism $P$ is lower bound as follows.
\begin{equation*}
r_{average}(P) \geq \frac{\sum_{j \in [m]} (\mathop{\mathbb{E}}_{t \sim F^j } \left[ \frac{1}{t} \right] )^{-1}}{\sum_{j \in [m]} t^{j}_{\min} }. 
\end{equation*}
\end{lemma}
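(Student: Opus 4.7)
The plan is to adapt the approach of Lemma~\ref{lemma: nulti_M_low}, replacing the two-sided bound from Lemma~\ref{lemma: condi_exp_M} with the almost-sure conditional convergence supplied by Lemma~\ref{lemma: condi_exp_P}. First I would condition on the vector of per-task first order statistics $(t^j_{(1)})_{j\in[m]}$, which suffices because $\sum_{j\in[m]}SC_{OPT}({\bf t}^j)=\sum_{j\in[m]}t^j_{(1)}$. For a fixed $\delta>0$, restrict to the event $A_\delta=\bigcap_{j\in[m]}\{t^j_{(1)}\le t^j_{\min}+\delta\}$; on $A_\delta$ the denominator in the ratio is at most $m\delta+\sum_{j\in[m]}t^j_{\min}$, which gives
$$r_{average}(P)\ \ge\ \frac{1}{m\delta+\sum_{j\in[m]}t^j_{\min}}\cdot\mathop{\mathbb{E}}\!\left[\mathbf{1}_{A_\delta}\sum_{j\in[m]}SC_P({\bf t}^j)\right].$$

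The next step is to lower bound the remaining expectation as $n\to\infty$. Using that the cost vectors across tasks are independent, the indicator $\mathbf{1}_{A_\delta}$ factors as $\mathbf{1}_{A_\delta^{-j}}\cdot\mathbf{1}_{\{t^j_{(1)}\le t^j_{\min}+\delta\}}$ for each fixed $j$, where $A_\delta^{-j}=\bigcap_{j'\neq j}\{t^{j'}_{(1)}\le t^{j'}_{\min}+\delta\}$ and $\Pr(A_\delta^{-j})\to 1$ by the order-statistic estimate used in~(\ref{eq: orderstatistics}). Conditional on $t^j_{(1)}=s$, Lemma~\ref{lemma: condi_exp_P} yields $SC_P({\bf t}^j)\to s/\mu^j_s$ almost surely. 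Since $SC_P\ge 0$, Fatou's lemma gives $\liminf_n \mathop{\mathbb{E}}[SC_P({\bf t}^j)\mid t^j_{(1)}=s]\ge s/\mu^j_s\ge \min_{s\in[t^j_{\min},t^j_{\min}+\delta]}\{s/\mu^j_s\}$. Combining,
$$\liminf_{n\to\infty}r_{average}(P)\ \ge\ \frac{\sum_{j\in[m]}\min_{s\in[t^j_{\min},t^j_{\min}+\delta]}\{s/\mu^j_s\}}{m\delta+\sum_{j\in[m]}t^j_{\min}}.$$
Letting $\delta\to 0$ and using that $G^j_s$ converges to $F^j$ as $s\to t^j_{\min}$, so that $s/\mu^j_s\to (\mathop{\mathbb{E}}_{t\sim F^j}[1/t])^{-1}$, yields the claimed lower bound.

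The main technical obstacle is the interchange of limit and expectation: Lemma~\ref{lemma: condi_exp_P} supplies only almost-sure convergence of $SC_P({\bf t}^j)$, not convergence in mean, so the limit cannot be pushed directly inside the expectation. For a \emph{lower} bound this is handled cleanly by Fatou's lemma because $SC_P\ge 0$, bypassing any integrability hypothesis on $F^j$; by contrast, establishing the matching upper bound in Lemma~\ref{lemma: nulti_P_upp} requires a dominating function (e.g.\ the harmonic mean is bounded by the arithmetic mean) together with finiteness of $\mathop{\mathbb{E}}_{t\sim F^j}[t]$. The secondary subtlety that $A_\delta$ couples the tasks through the denominator is dissolved by task-wise independence, which factorises $\mathbf{1}_{A_\delta}$ into a product of per-task indicators.
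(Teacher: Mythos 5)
Your proof is correct and follows essentially the same route as the paper's: condition on the per-task first order statistics, restrict to the event $\bigcap_{j}\{t^j_{(1)}\le t^j_{\min}+\delta\}$ to control the denominator by $m\delta+\sum_j t^j_{\min}$, invoke the almost-sure convergence of $SC_P({\bf t}^j)$ to $s/\mu^j_s$ from Lemma~\ref{lemma: condi_exp_P}, and finally let $\delta\to 0$. Your explicit appeal to Fatou's lemma (valid since $SC_P\ge 0$) to pass from almost-sure convergence to a lower bound on the conditional expectation is a welcome tightening of a step the paper leaves implicit, where it simply writes a convergence arrow inside the expectation.
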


\begin{proof}
For any $\delta > 0$, according to Lemma~\ref{lemma: condi_exp_P}, we have 

\begin{align*}
r_{average}(P) &= \mathop{\mathbb{E}}_{t_{(1)}^j} \left[ \mathop{\mathbb{E}}  \bigg[ \left.\frac{  \sum_{j \in [m]} SC_{M}({\bf t}^j) }{ \sum_{j \in [m]} SC_{OPT}({\bf t}^j) }  \right| t_{(1)}^j  \bigg] \right] \\
& \longrightarrow \mathop{\mathbb{E}}_{t_{(1)}^j} \bigg[   \frac{ 1 }{ \sum_{j \in [m]} t^j_{(1)} }  \cdot  \sum_{j \in [m]} \frac{t^j_{(1)}}{\mu^j_{t^j_{(1)}}}   \bigg]  .
\end{align*}

We can lower bound the expectation by computing its value when $t^j_{(1)} \leq t_{\min} + \delta$ only. Therefore,
\begin{equation*}
\begin{split}
r_{average}(P) & \geq \Pr \Big( \bigcap_{j \in[m]} \{ t^j_{(1)} \leq t^j_{\min} + \delta \} \Big) \cdot \frac{ \sum\limits_{j \in [m]} \min\limits_{ t^j_{\min} \leq s \leq t^j_{\min} + \delta } \Big\{ 
\frac{ s }{\mu^j_{s}}   \Big\} }{  m \delta  + \sum_{j \in [m]} t^j_{\min} }.  
\end{split}
\end{equation*}

According to (\ref{eq: orderstatistics}), we know that $\Pr \Big( \bigcap_{j \in[m]} \{ t^j_{(1)} \leq t_{\min} + \delta \} \Big)$ converges to $1$, as $n$ goes to infinity. Therefore, 
\begin{align*}
& r_{average}(P) \geq \frac{ \sum_{j \in [m]} \min_{ t^j_{\min} \leq s \leq t^j_{\min} + \delta } \bigg\{ 
\frac{ s }{\mu^j_{s}}   \bigg\} }{  m \delta  + \sum_{j \in [m]} t^j_{\min} }.
\end{align*}
Last, we complete the proof by letting $\delta$ approaches 0.

\end{proof}

Combining Lemma~\ref{lemma: nulti_P_upp} and Lemma~\ref{lemma: nulti_P_low}, we obtain Theorem \ref{Pratio}.

\end{document}